\def\leqslant{\le}
\def\bq{\begin{eqnarray}}
\def\eq{\end{eqnarray}}
\def\bqq{\begin{align*}}
\def\eqq{\end{align*}}
\def\nn{\nonumber}
\def\eps{\varepsilon}
\renewcommand{\epsilon}{\varepsilon}
\newcommand\1{{\ensuremath {\mathds 1} }}
\def\Tr{{\rm Tr}}
\def\cF {\mathcal{F}}
\def\cN{\mathcal{N}}
\def\R {\mathbb{R}}
\def\H{\gH}
\def\R {\mathbb{R}}
\def\d{\,{\rm d}}
\newcommand{\gH}{\mathfrak{H}}
\newcommand{\bH}{\mathbb{H}}
\newcommand{\dGamma}{{\ensuremath{\rm d}\Gamma}}
\begin{document}

\title*{Norm approximation for many-body quantum dynamics and Bogoliubov theory}
\titlerunning{Quantum dynamics and Bogoliubov theory}
\author{Phan Th\`anh Nam and Marcin Napi\'orkowski}
\authorrunning{P.T. Nam and M. Napi\'orkowski}
\institute{Phan Th\`anh Nam \at Department of Mathematics and Statistics, Masaryk University,
Kotl\'a\v rsk\'a 2, 611 37 Brno, Czech Republic, \email{ptnam@math.muni.cz}
\and Marcin Napi\'orkowski \at Department of Mathematical Methods in Physics, Faculty of Physics, University of Warsaw,  Pasteura 5, 02-093 Warszawa, Poland, \email{marcin.napiorkowski@fuw.edu.pl}
\and
{A contribution for the volume "Advances in Quantum Mechanics: contemporary trends and open problems" of the INdAM-Springer series}}
%
%


\maketitle

\abstract{We review some recent results on the norm approximation to the Schr\"odinger dynamics. We consider $N$ bosons in $\R^3$ with an interaction potential of the form $N^{3\beta-1}w(N^{\beta}(x-y))$ with $0\le \beta<1/2$, and show that in the large $N$ limit, the fluctuations around the condensate can be effectively described using Bogoliubov approximation.}


\section{Introduction}

In 1924-25, Bose \cite{Bose-24} and Einstein \cite{Einstein-25} predicted that at a very low temperature, many bosons condense into a common quantum state. It took 70 years until this phenomenon was first observed by Cornell, Wieman and Ketterle \cite{CorWie-95,Ketterle-95}. Since then, many interesting questions remain unsolved from the theoretical point of view. In fact, Bose and Einstein only considered the ideal gas. The study of interacting Bose gas was initiated in 1947 by Bogoliubov \cite{Bogoliubov-47}. Roughly speaking, Bogoliubov theory is based on the reduction to quasi-free states, which can be seen as the bosonic analogue to the Bardeen--Cooper--Schrieffer theory \cite{BCS-57} for superconductivity.

In the last decades, there have been many attempts to justify Bogoliubov theory from the first principles of quantum mechanics, namely from Schr\"odinger equation. In the context of the ground state problem, this has been done successfully for one and two-component Bose gases~\cite{LieSol-01,LieSol-04,Solovej-06}, for the Lee-Huang-Yang formula of homogeneous, dilute gases~\cite{ErdSchYau-08,GiuSei-09,YauYin-09} and for the excitation spectrum in the mean-field regime \cite{Seiringer-11,GreSei-13,LewNamSerSol-15,DerNap-13,NamSei-15}. In the context of the dynamical problem, Bogoliubov theory has been used widely to study the quantum dynamics of coherent states in Fock space \cite{Hepp-74,GinVel-79,GinVel-79b,RodSch-09,GriMacMar-10,GriMacMar-11,GriMac-13,Kuz-15b,BocCenSch-15,GriMac-15}. Very recently, Lewin, Schlein and one of us \cite{LewNamSch-15} were able to justify Bogoliubov theory as a norm approximation for the $N$-particle quantum dynamics in the mean-field regime. In \cite{NamNap-15,NamNap-16}, we revisited the approach in \cite{LewNamSch-15} and extended it to the case of a dilute gas. In the following, we will review our results in \cite{NamNap-15,NamNap-16} and explain the ideas of the proof.

We consider a system of $N$ bosons in $\R^3$, described by a wave function $\Psi_N(t)$ in the Hilbert space $\H^N=\bigotimes_{\text{sym}}^N L^2(\R^3)$. The system is governed by Schr\"odinger equation $\Psi_N(t) = e^{-itH_N}\Psi_{N}(0)$ with a typical  $N$-body Hamiltonian 
\begin{equation*} 
H_N= \sum\limits_{j = 1}^N -\Delta_{x_j} + \frac{1}{N-1} \sum\limits_{1 \leqslant j < k \leqslant N} {w_N(x_j-x_k)}.
\end{equation*}
We are interested in the delta-type interaction
\begin{equation*}  
w_N(x-y)= N^{3\beta} w(N^\beta (x-y))
\end{equation*}
where $w\ge 0$ is a fixed function which is nice enough (smooth, compact support, radially symmetric and decreasing). We put the coupling constant $1/(N-1)$ in order to make the kinetic energy and interaction energy comparable in the large $N$ limit.

The parameter $\beta \ge 0$ describes the character of the interaction between the particles. In the mean-field regime $\beta<1/3$, there are many but weak collisions and it is naturally to treat the particles as if they were independent but subjected to a common self-consistent mean-field potential. In the dilute regime $\beta>1/3$, there are few but strong collisions and the particles are more correlated. The latter regime is more relevant physically, but also more difficult mathematically. 

Our motivation is that $\Psi_N(0)$ is the ground state of a trapped system and the time evolution $\Psi_N(t)$ is observed when the trapping potential is turned off. From the rigorous result on the ground state  in \cite{LewNamSerSol-15}, we will assume that 
\begin{equation} \label{eq:PsiN0-intro}
\Psi_N(0) = \sum_{n=0}^N u(0)^{\otimes (N-n)} \otimes_s \varphi_n(0)=\sum_{n=0}^N \frac{(a^*(u(0)))^{N-n}}{\sqrt{(N-n)!}} \varphi_n(0)
\end{equation}
where $u(0)$ is a normalized function in $L^2(\R^3)$ which describes the condensate and $\Phi(0)=(\varphi_n(0))_{n=0}^\infty$ is a state in the Fock space $\cF(\{u_0\}^\bot)$ (see \eqref{eq:F+} below) for excited particles. Here we use the usual notations of the annihilation and creation operators
$$
a^*(f)=\int_{\R^3}  f(x) a_x^* \d x, \quad a(f)=\int_{\R^3} \overline{f(x)} a_x \d x, \quad \forall f\in \gH,
$$
which satisfy $[a^*_x,a^*_y]=[a_x,a_y]=0$, $[a_x,a^*_y]=\delta(x-y)$. 

When $\beta=0$, it was shown in \cite{LewNamSch-15}  that if \eqref{eq:PsiN0-intro} holds then
\begin{equation} \label{eq:PsiNt-intro}
\lim_{N\to \infty} \left\| \Psi_N(t) - \sum_{n=0}^N u(t)^{\otimes (N-n)} \otimes_s \varphi_n(t) \right\|=0
\end{equation}
(see also the recent work \cite{MitPetPic-16} for another approach). Here $u(t)$ is the evolution of the condensate, governed by Hartree equation 
\begin{equation} \label{eq:Hartree-equation}
i\partial_t u(t) =  \big(-\Delta +w_N*|u(t)|^2 -\mu_N(t)\big) u(t), \quad u(t=0)=u(0)
\end{equation}
with the phase parameter $\mu_N(t)$ which can be chosen as
$$
\mu_N(t)=\frac12\iint_{\R^3\times\R^3}|u(t,x)|^2w_N(x-y)|u(t,y)|^2 \d x \d y.
$$
The vector $\Phi(t)=(\varphi_n(t))_{n=0}^\infty$ in \eqref{eq:PsiNt-intro} is a state in the excited Fock space 
\bq \label{eq:F+}
\cF_+(t)= \cF(\{u(t)\}^\bot)= \bigoplus_{n=0}^\infty \gH_+(t)^n, \quad \gH_+(t)^n = \bigotimes^n_{\rm sym} \{u(t)\}^\bot
\eq
and its evolution is determined by Bogoliubov equation
\begin{equation} \label{eq:Bogoliubov-equation}
i\partial_t \Phi(t) = \bH(t) \Phi(t), \quad  \Phi(t=0) = \Phi(0).
\end{equation}
Here $\bH(t)$ is a quadraric Hamiltonian in Fock space: 
$$
\bH(t)= \dGamma(h(t)) + \frac12\iint_{\R^3\times\R^3}\Big(K_2(t,x,y)a^*_x a^*_y +\overline{K_2(t,x,y)}a_x a_y\Big)\d x\,\d y,
$$
which is obtained from Bogoliubov approximation (which we will explain in Section \ref{sec:prop-eff-eq}).
We use the notations $\dGamma(A)=\int a_x^* A_x a_x \d x$ (for example, $\dGamma(1)=\cN$ is the number operator) and
\begin{align*}
h(t)&=-\Delta+|u(t,\cdot)|^2\ast w_N -\mu_N(t) + Q(t) \widetilde{K}_1(t) Q(t), \\
K_2(t) &=Q(t)\otimes Q(t)\widetilde{K}_2(t), \quad Q(t)=1-|u(t) \rangle \langle u(t)|,
\end{align*}
where $\widetilde{K}_2(t,x,y)=u(t,x)w_N(x-y)u(t,y)$ is a function in $\gH^2$ and $\widetilde{K}_1(t)$ is an operator on $\gH$ with kernel $\widetilde{K}_1(t,x,y)=u(t,x)w_N(x-y)\overline{u(t,y)}$.

In order to extend \eqref{eq:PsiNt-intro} to the case $\beta>0$, we have to restrict the initial state $\Phi(0)$ in \eqref{eq:PsiN0-intro} to quasi-free states (namely the states satisfying Wick theorem) with finite kinetic energy. This reduction is again motivated by the rigorous properties of ground states in \cite{LewNamSerSol-15}. Our main result in \cite{NamNap-16} is

\begin{theorem}[Validity of Bogoliubov theory as a norm approximation] \label{thm:main}  Let $\Psi_N(t) = e^{-itH_N}\Psi_{N}(0)$ with $\Psi_N(0)$ given in \eqref{eq:PsiN0-intro}. We assume 
\begin{itemize}
\item $u(t)$ satisfies Hartree equation \eqref{eq:Hartree-equation} with the (possibly $N$-dependent) initial state $u(0,\cdot)$ satisfying $\| u(0,\cdot)\|_{W^{\ell,1}(\R^3)} \le C$ for $\ell$ sufficiently large;

\smallskip

\item $\Phi(t)=(\varphi_n(t))_{n=0}^\infty \in \cF_+(t)$ satisfies Bogoliubov equation \eqref{eq:Bogoliubov-equation} (or equivalently, equation \eqref{eq:linear-Bog-dm} in Section \ref{sec:prop-eff-eq}) with the (possibly $N$-dependent) initial state $\Phi(0)$ being a quasi-free state in $\cF_+(0)$ such that for all $\eps>0$,
\bq \label{eq:assumption-Phi0}
\big\langle \Phi(0), \cN  \Phi(0) \big\rangle \le C_\eps N^{\eps}, \quad \big\langle \Phi(0), \dGamma(1-\Delta) \Phi(0) \big\rangle\le C_\eps N^{\beta+\eps}.
\eq 
\end{itemize}
Then for all $0\le \beta<1/2$, all $\eps>0$ and all $t>0$ we have 
\begin{align} \label{eq:thm-mainresult}
\Big\| \Psi_N(t) - \sum_{n=0}^N u(t)^{\otimes (N-n)} \otimes_s \varphi_n(t) \Big\|_{\gH^N}^2 \le C_\eps (1+t)^{1+\eps} N^{(2\beta-1+\eps)/2}.
\end{align}
\end{theorem}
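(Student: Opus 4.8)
The plan is to follow the excitation-map strategy used in \cite{LewNamSch-15}: transport the $N$-body dynamics to the excited Fock space and compare it with the Bogoliubov flow through a Gr\"onwall argument. First I would introduce the unitary excitation map $U_N(t)\colon \gH^N \to \cF_+^{\le N}(t)=\bigoplus_{n=0}^N \gH_+(t)^n$ that factors out the condensate, sending $\sum_{n=0}^N u(t)^{\otimes(N-n)}\otimes_s \varphi_n$ to $(\varphi_n)_{n=0}^N$. Under $U_N(t)$ the condensate creation/annihilation operators transform by the familiar rules
\[
U_N a^*(u)a(u) U_N^* = N-\cN, \qquad U_N a^*(f)a(u) U_N^* = a^*(f)\sqrt{N-\cN}
\]
for $f\perp u(t)$ (and their adjoints), while $a^*(f)a(g)$ is unchanged for $f,g\perp u(t)$. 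Setting $\Phi_N(t):=U_N(t)\Psi_N(t)$, the norm on the left of \eqref{eq:thm-mainresult} equals $\|\Phi_N(t)-\Phi(t)\|^2$ up to the truncation of $\Phi(t)$ above $N$ particles, which is negligible by the a priori bounds below.

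Next I would compute the generator of the excitation dynamics,
\[
\cG_N(t)=\big(i\partial_t U_N(t)\big)U_N(t)^*+U_N(t)H_N U_N(t)^*,
\]
sorting its terms by the number of condensate factors $a^*(u),a(u)$. Using the Hartree equation \eqref{eq:Hartree-equation} to cancel the contributions of order $N$ and $\sqrt N$, and then replacing $\sqrt{N-\cN}$ by $\sqrt N$ in the quadratic part, I would obtain $\cG_N(t)=\bH(t)+\cE_N(t)$ on $\cF_+^{\le N}(t)$, where $\bH(t)$ is the Bogoliubov Hamiltonian from \eqref{eq:Bogoliubov-equation} and $\cE_N(t)$ collects the cubic and quartic terms of $U_N H_N U_N^*$ together with the remainders of the expansion $\sqrt{N-\cN}\approx\sqrt N$.

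I would then establish the a priori moment bounds feeding the Gr\"onwall estimate. On the Bogoliubov side, since $\bH(t)$ is quadratic and preserves quasi-freeness, I would propagate \eqref{eq:assumption-Phi0} through the density-matrix form \eqref{eq:linear-Bog-dm}, obtaining $\langle\Phi(t),\cN\Phi(t)\rangle \le C_\eps N^\eps$ and $\langle\Phi(t),\dGamma(1-\Delta)\Phi(t)\rangle \le C_\eps N^{\beta+\eps}$ with at most polynomial growth in $t$, together with analogous bounds for higher powers of $\cN$. On the true side, energy conservation for $H_N$ plus a commutator estimate for $\tfrac{d}{dt}\langle\Phi_N(t),(\cN+1)\Phi_N(t)\rangle$ would control the number of excitations in $\Phi_N(t)$. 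Since $\cG_N(t)$ is self-adjoint and both vectors solve Schr\"odinger-type equations, the difference obeys
\[
\frac{d}{dt}\|\Phi_N(t)-\Phi(t)\|^2 = 2\,\mathrm{Im}\,\big\langle \Phi_N(t)-\Phi(t),\,\cE_N(t)\Phi(t)\big\rangle,
\]
which I would bound by $\|\Phi_N(t)-\Phi(t)\|\,\|\cE_N(t)\Phi(t)\|$ and close with Gr\"onwall's lemma once $\|\cE_N(t)\Phi(t)\|$ is estimated.

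The main obstacle is precisely the estimate of $\cE_N(t)\Phi(t)$, in particular the cubic terms carrying the singular potential $w_N$ with only one or two condensate factors. The $L^1$-norm of $w_N$ stays $O(1)$, but its concentration must be compensated by the kinetic energy, and it is the a priori bound $\langle\Phi(t),\dGamma(1-\Delta)\Phi(t)\rangle \le C_\eps N^{\beta+\eps}$, combined with the coupling factor $N^{-1/2}$, that produces the rate $N^{(2\beta-1+\eps)/2}$. Turning this heuristic into rigorous operator bounds --- dominating each piece of $\cE_N(t)$ by $\bH(t)$, $\cN$ and $\dGamma(1-\Delta)$ via Sobolev- and Hardy-type inequalities --- is the delicate step, and the threshold $\beta<1/2$ is exactly where this scheme (with no Bogoliubov/scattering correction built into the trial state) remains applicable.
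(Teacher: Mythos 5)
Your setup coincides with the paper's Step 1: the excitation map $U_N(t)$, the transformed dynamics $\Phi_N(t)=U_N(t)\Psi_N(t)$ solving \eqref{eq:eq-PhiNt} with generator $\bH(t)$ plus error terms, and the reduction of \eqref{eq:thm-mainresult} to a bound on $\|\Phi_N(t)-\Phi(t)\|$. The genuine gap is in how you propose to close the comparison. You bound $\partial_t\|\Phi_N-\Phi\|^2$ by $\|\Phi_N-\Phi\|\,\|\cE_N(t)\Phi(t)\|$, which commits you to \emph{norm} estimates of the error terms acting on $\Phi(t)$. For a singular potential this is strictly lossier than form estimates: any norm bound on the quartic term $R_4\Phi(t)$ unavoidably involves $w_N^2$, i.e. $\|w_N\|_{L^2}\sim N^{3\beta/2}$ (or $\|w_N^2\|_{L^{3/2}}\sim N^{4\beta}$ after spending kinetic energy), rather than the $\|w_N\|_{L^{3/2}}\sim N^{\beta}$ that enters the quadratic form bound $R_4\le CN^{\beta-1}\dGamma(-\Delta)\cN$ of Lemma \ref{lem:Rj}. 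Since $\Phi(t)$ is quasi-free with nonvanishing pairing $\alpha(t)$, one finds $\|R_4\Phi(t)\|$ of order $N^{-1}\big(\iint w_N^2\,|\alpha|^2+\cdots\big)^{1/2}$, roughly $N^{5\beta/2-1+\eps}$: this exceeds the target rate $N^{(2\beta-1)/2}$ already for $\beta>1/3$ and does not even vanish on the whole range $\beta<1/2$. This is precisely why the paper never takes a norm of the error. It writes $\partial_t\|\Phi_N-\Phi\|^2=2\Re\,\langle i\Phi_N(t),(\widetilde H_N(t)-\bH(t))\Phi(t)\rangle$ (no difference vector, no Gr\"onwall in the difference) and estimates each pairing $\langle\Phi_N,(R_j+R_j^*)\1^{\le N}\Phi\rangle$ by the abstract Cauchy--Schwarz inequality for forms ($\pm B\le A$ implies $|\langle f,Bg\rangle|\le 3\langle f,Af\rangle^{1/2}\langle g,Ag\rangle^{1/2}$), combined with a particle-number cutoff $\1^{\le M}$, $M=N^{3\eps}$, and high moments of $\cN$ for quasi-free states to handle the tails. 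Every estimate then stays at the level of expectation values, where only $\|w_N\|_{L^{3/2}}\sim N^{\beta}$ enters, and this is what makes the full range $\beta<1/2$ accessible.

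A second missing ingredient: your a priori bounds omit the kinetic estimate for the transformed true dynamics, $\langle\Phi_N(t),\dGamma(1-\Delta)\Phi_N(t)\rangle\le C_\eps(N^{\beta+\eps}+N^{3\beta-1+\eps})$ (Theorem \ref{lem:wHN-kinetic}), which the paper identifies as its main new ingredient. You propagate number and kinetic bounds for the Bogoliubov solution $\Phi(t)$ (this is Lemma \ref{lem:bH-kinetic} and \eqref{eq:Bog-N}) but only a number bound for $\Phi_N(t)$. Once the comparison is done via form bounds, however, the factor $\langle\Phi_N,A\,\Phi_N\rangle^{1/2}$ with $A$ containing $\1^{\le M}R_4\le CN^{\beta-1}M\dGamma(-\Delta)$ forces you to control the kinetic energy of $\Phi_N(t)$ itself, uniformly in time. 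Establishing that bound --- by a Gr\"onwall argument on $\langle\Phi_N,\widetilde H_N\Phi_N\rangle$ using \eqref{eq:BogHam-bounds1}--\eqref{eq:BogHam-bounds3} together with Lemma \ref{lem:Rj} --- is the substantive new work in \cite{NamNap-16}, and without it neither your scheme nor the paper's argument closes.
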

{\em Convention.} We always denote by $C$ (or $C_\eps$) a general positive constant independent of $N$ and $t$ ($C_\eps$ may depend on $\eps$).

There are grand canonical analogues of \eqref{eq:PsiNt-intro} related to the fluctuations around coherent states in Fock space \cite{Hepp-74,GinVel-79, GinVel-79b,GriMacMar-10,GriMacMar-11,GriMac-13,Kuz-15b}. In particular, our Theorem \ref{thm:main} is comparable to the Fock-space result of Kuz \cite{Kuz-15b}. 
 Thanks to a heuristic argument in \cite{Kuz-15b}, the range $0\le \beta<1/2$ is optimal for the norm approximation \eqref{eq:PsiNt-intro} to hold. 

When $\beta>1/2$, to achieve \eqref{eq:PsiNt-intro} we have to modify the effective equations to take two-body scattering processes into account. This has been done in the Fock space setting by Boccato, Cenatiempo and Schlein \cite{BocCenSch-15} and Grillakis and Machedon \cite{GriMac-15} (see also \cite{BacBreCheFroSig-15} for a related study). Similar results for $N$-particle dynamics are still open and we hope to be able to come back to this problem in the future.

The proof of Theorem \ref{thm:main} in \cite{NamNap-16} is built up on the previous works \cite{LewNamSch-15} and \cite{NamNap-15}. The main new ingredient is the following kinetic estimate. 

\begin{theorem}[Kinetic estimate] \label{lem:HN-kinetic} Let $\Psi_N(0)$ as in Theorem \ref{thm:main}. Then for all $0<\beta<1/2$, all $\eps>0$ and all $t>0$, we have 
\bq \label{eq:new-kinetic-leadingorder}
\big \langle \Psi_N(t), \dGamma(Q(t)(1-\Delta)Q(t))  \Psi_N(t) \big\rangle \le  C_\eps  (N^{\beta+\eps}+N^{3\beta-1}).
\eq
\end{theorem}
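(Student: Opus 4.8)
The plan is to deduce \eqref{eq:new-kinetic-leadingorder} from \emph{conservation of energy}, which is also what makes the right-hand side independent of $t$. Two conserved quantities are available: the total energy $\langle \Psi_N(t), H_N \Psi_N(t)\rangle$ is constant along the Schr\"odinger flow, while the Hartree energy
\[
\mathcal{E}_{\mathrm H}[u(t)] = \norm{\nabla u(t)}^2 + \frac12 \iint_{\R^3\times\R^3} |u(t,x)|^2 w_N(x-y)|u(t,y)|^2\,\d x\,\d y
\]
is constant along \eqref{eq:Hartree-equation}. Writing $\cK_Q(t):=\dGamma(Q(t)(1-\Delta)Q(t))$ for the excitation kinetic energy, the idea is to sandwich $\langle\Psi_N(t),\cK_Q(t)\Psi_N(t)\rangle$ between these two invariants.

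The central step is an operator lower bound, valid at \emph{every} time $t$,
\[
H_N \ge N\,\mathcal{E}_{\mathrm H}[u(t)] + \tfrac12\,\cK_Q(t) - C_\eps\big(N^{\beta+\eps}+N^{3\beta-1}\big).
\]
To establish it I would apply the exact excitation (c-number substitution) unitary attached to $u(t)$, which rewrites $H_N = N\,\mathcal{E}_{\mathrm H}[u(t)] + \bH_{\mathrm{eff}}(t) + (\text{cubic}) + (\text{quartic})$ in the excited creation and annihilation operators, the fully condensed part reproducing exactly the Hartree energy. Since $w\ge0$ the all-excited quartic term is nonnegative and is discarded. For the quadratic part $\bH_{\mathrm{eff}}(t)=\dGamma(h(t))+\frac12\iint(K_2 a^*_x a^*_y+\mathrm{h.c.})$ one uses $h(t)\ge-\Delta-C$ to produce $\cK_Q(t)$, and controls the pairing term by Cauchy--Schwarz after discounting high momenta through $\norm{(1-\Delta)^{-1/2}\otimes(1-\Delta)^{-1/2}K_2}_{L^2}$ in place of the lossy $\norm{K_2}_{L^2}^2\sim N^{3\beta}$, exploiting that the two created excitations carry relative momentum $\sim N^\beta$. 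The cubic terms, handled by Cauchy--Schwarz against $\cK_Q(t)$ together with the prefactor $1/(N-1)$ and the a priori control $\langle\Psi_N(t),\cN_+(t)\Psi_N(t)\rangle\le C_\eps N^\eps$ of the number of excitations $\cN_+(t)=\dGamma(Q(t))$, contribute the short-range scale $N^{3\beta-1}$.

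It then remains to bound the initial energy from above. Inserting the explicit form \eqref{eq:PsiN0-intro} of $\Psi_N(0)$, using that $\Phi(0)$ is quasi-free, and invoking \eqref{eq:assumption-Phi0} gives
\[
\langle \Psi_N(0), H_N \Psi_N(0)\rangle \le N\,\mathcal{E}_{\mathrm H}[u(0)] + C_\eps\big(N^{\beta+\eps}+N^{3\beta-1}\big),
\]
where the $N^{\beta+\eps}$ is exactly the initial excitation kinetic energy permitted by \eqref{eq:assumption-Phi0}. Combining the lower bound at time $t$ with the two conservation laws and this initial estimate, the leading $N$-order terms cancel because $\mathcal{E}_{\mathrm H}[u(t)]=\mathcal{E}_{\mathrm H}[u(0)]$, and one obtains $\langle\Psi_N(t),\cK_Q(t)\Psi_N(t)\rangle\le 2\big(\langle\Psi_N(0),H_N\Psi_N(0)\rangle-N\mathcal{E}_{\mathrm H}[u(0)]\big)+C_\eps(\ldots)\le C_\eps(N^{\beta+\eps}+N^{3\beta-1})$, which is \eqref{eq:new-kinetic-leadingorder}.

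The main obstacle is the operator lower bound, and inside it the interaction terms that couple the condensate to the excitations. The difficulty is entirely the singularity of $w_N$: naive estimates cost powers $N^{3\beta}$ (from $\norm{w_N}_{L^\infty}$ or $\norm{w_N}_{L^2}^2$) or $N^{2\beta}$ (from $\norm{\nabla w_N}$), which overshoot the target, so one must systematically trade these against the excitation kinetic energy $\cK_Q(t)$ and the prefactor $1/(N-1)$, the surviving contribution being the genuine short-range correlation scale $N^{3\beta-1}$. By contrast, the commutators of $-\Delta$ with $Q(t)$ produced by the time dependence and by the projections are finite-rank operators with $O(1)$ kernels, thanks to the $W^{\ell,1}$-regularity of $u$ assumed in Theorem \ref{thm:main}, and are harmless.
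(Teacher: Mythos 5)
Your strategy hinges on the claimed operator lower bound
\begin{equation*}
H_N \;\ge\; N\,\mathcal{E}_{\mathrm H}[u(t)] \;+\; \tfrac12\,\dGamma\big(Q(t)(1-\Delta)Q(t)\big) \;-\; C_\eps\big(N^{\beta+\eps}+N^{3\beta-1}\big),
\end{equation*}
and this bound is false, which sinks the whole energy-conservation scheme. Since the excitation kinetic term is nonnegative, your inequality would imply $H_N\ge N\,\mathcal{E}_{\mathrm H}[u(0)]-o(N)$ as operators (recall $\mathcal{E}_{\mathrm H}[u(t)]=\mathcal{E}_{\mathrm H}[u(0)]$ and $\beta<1/2$), i.e.\ the bottom of the spectrum of $H_N$ would be of order $N$, because $\mathcal{E}_{\mathrm H}[u(0)]\ge\|\nabla u(0)\|_{L^2}^2$ is of order one. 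But the system is translation invariant (the trap has been released): testing with the spread-out product state $v_R^{\otimes N}$, $v_R(x)=R^{-3/2}v(x/R)$, gives $\langle v_R^{\otimes N},H_Nv_R^{\otimes N}\rangle\le CN(R^{-2}+R^{-3})$, a contradiction for $R$ large. The root of the failure is in your decomposition of $H_N$ in the excitation picture: it is \emph{not} of the form ``Hartree energy $+$ quadratic $+$ cubic $+$ quartic''. Because $u(t)$ is not a critical point of the translation-invariant Hartree functional, there is also a nonvanishing term \emph{linear} in the excitation operators, roughly $\sqrt{N}\,a\big(Q(t)[(-\Delta+w_N*|u(t)|^2)u(t)]\big)+\mathrm{h.c.}$, which you omitted. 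Even granting your number bound, its expectation is of size $\sqrt{N}\,\langle\cN\rangle^{1/2}\sim N^{1/2+\eps/2}$, which dominates $N^{\beta+\eps}+N^{3\beta-1}$ for \emph{every} $\beta<1/2$. In the paper this linear term is cancelled not algebraically but dynamically: it cancels against $i(\partial_t U_N(t))U_N(t)^*$ precisely because $u(t)$ solves \eqref{eq:Hartree-equation}. That cancellation is available only in the generator $\widetilde H_N(t)$ of the fluctuation dynamics \eqref{eq:eq-PhiNt}, not in a fixed-time energy decomposition, which is why a static conservation-of-energy argument cannot reach the stated estimate.

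A second, independent gap is circularity: the ``a priori control'' $\langle\Psi_N(t),\dGamma(Q(t))\Psi_N(t)\rangle\le C_\eps N^\eps$ that you use on the cubic terms is an assumption only at $t=0$ (via \eqref{eq:assumption-Phi0}); at positive times it is essentially the condensation statement, i.e.\ part of what must be proven. The paper's proof (Theorem \ref{lem:wHN-kinetic}) resolves both issues simultaneously. It works with $\Phi_N(t)=U_N(t)\Psi_N(t)$, whose generator $\widetilde H_N(t)$ is genuinely time-dependent, so there is no conserved energy; instead one uses the identity
\begin{equation*}
\big\langle\Phi_N(t),\widetilde H_N(t)\Phi_N(t)\big\rangle-\big\langle\Phi_N(0),\widetilde H_N(0)\Phi_N(0)\big\rangle=\int_0^t\big\langle\Phi_N(s),\partial_s\widetilde H_N(s)\Phi_N(s)\big\rangle\d s,
\end{equation*}
bounds $\pm\partial_s\widetilde H_N(s)$ and $\pm i[\widetilde H_N(s),\cN]$ by $(1+s)^{-3/2}$ times $\dGamma(1-\Delta)+R_4+\cN+N^{\beta+\eps}$ (this is \eqref{eq:BogHam-bounds1}--\eqref{eq:BogHam-bounds3} plus Lemma \ref{lem:Rj}, all resting on the dispersive decay $\|u(s)\|_{L^\infty}\lesssim(1+s)^{-3/2}$ from Lemma \ref{lem:Hartree-equation}), and then estimates $\langle\cN\rangle_{\Phi_N(t)}$ and the kinetic energy \emph{together} in a Gr\"onwall loop. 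The uniformity in $t$ of the final constant comes from the integrability of $(1+s)^{-3/2}$, not from a conservation law. Your instinct that the singular scales $N^{3\beta}$ and $N^{2\beta}$ must be traded against the kinetic energy, the quartic term and the $1/(N-1)$ prefactor is sound --- that is exactly what Lemma \ref{lem:Rj} accomplishes --- but it has to be run inside this time-dependent Gr\"onwall scheme rather than a fixed-time energy comparison.
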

We can introduce the density matrix $\gamma_{\Psi_N(t)}^{(1)}:\gH\to \gH$ with kernel  $\gamma_{\Psi_N(t)}^{(1)}(x,y)= \langle \Psi_N(t), a_y^* a_x  \Psi_N(t) \rangle$ and rewrite \eqref{eq:new-kinetic-leadingorder} as 
\bq \label{eq:CS-1}
\Tr \Big( \sqrt{1-\Delta}Q(t) \gamma_{\Psi_N(t)}^{(1)}Q(t) \sqrt{1-\Delta} \Big) \le  C_\eps  (N^{\beta+\eps}+N^{3\beta-1}).
\eq
By the Cauchy-Schwarz inequality, \eqref{eq:CS-1} implies that for all $0<\beta<2/3$,
\bq \label{eq:cv-tr-gamma1}
\lim_{N\to \infty} \Tr \Big| \sqrt{1-\Delta} \Big(N^{-1}\gamma_{\Psi_N}^{(1)} -  |u(t) \rangle \langle u(t)| \Big)\sqrt{1-\Delta} \Big| =0
\eq
(see Section \ref{sec:kinetic-bound} for more details). In case $\beta=0$, the approximation of the form \eqref{eq:cv-tr-gamma1} has been studied in \cite{MicSch-12,Luhrmann-12,AnaHot-16,MitPetPic-16}.  Note that \eqref{eq:cv-tr-gamma1} is stronger than the standard definition of the Bose-Einstein condensation
\bq \label{eq:cv-tr-gamma2}
\lim_{N\to \infty} \Tr \Big| N^{-1}\gamma_{\Psi_N}^{(1)} -  |u(t) \rangle \langle u(t)| \Big| =0
\eq
which has been studied by many authors; see  \cite{Spohn-80,BarGolMau-00,ErdYau-01,AdaGolTet-07,ErdSchYau-07} for some pioneer works (in these works, the convergence \eqref{eq:cv-tr-gamma2} was derived using the BBGKY hierarchy, a method that is less quantitative than our approach). 

Note that when $\beta=1$ (the Gross--Pitaevskii regime), the strong correlations between particles require a subtle correction: the nonlinear term $w_N*|u(t)|^2$ in Hartree equation \eqref{eq:Hartree-equation} has to be replaced by $8\pi a|u(t)|^2$ with $a$ being the scattering length of $w$. This has been justified rigorously in the context of the Bose-Einstein condensation \eqref{eq:cv-tr-gamma2}; see  \cite{LieSeiYng-00,LieSei-06,NamRouSei-15} for the ground state problem and \cite{ErdSchYau-10,ErdSchYau-09,BenOliSch-15,Pickl-15} for the dynamical problem. The norm approximation is completely open.

In the rest, we discuss Hartree and Bogoliubov equations in Section \ref{sec:prop-eff-eq}, and then go to the proofs of Theorems \ref{lem:HN-kinetic} and Theorem \ref{thm:main} in Sections \ref{sec:kinetic-bound} and   \ref{sec:sketch-proof}, respectively.

\section{Effective equations}\label{sec:prop-eff-eq}

We recall the well-posedness of Hartree equation from  \cite[Prop. 3.3 \& Cor. 3.4]{GriMac-13}.

\begin{lemma}[Hartree equation] \label{lem:Hartree-equation} If $u(0,\cdot)\in H^2(\R^3)$, then  Hartree equation \eqref{eq:Hartree-equation} has a unique global solution $u \in C( [0,\infty),H^2(\R^3)) \cap C^1((0,\infty),L^2(\R^3))$. 
Moreover, if $\|u(0,\cdot)\|_{W^{\ell,1}(\R^3)}\le C$ with $\ell$ sufficiently large, then 
$$ \|u(t,\cdot)\|_{H^2} \le C,\quad \|\partial_t u(t,\cdot)\|_{L^2} \le C,\quad \|u(t, \cdot)\|_{L^\infty} + \|\partial_t u(t,\cdot)\|_{L^\infty} \le C(1+t)^{-3/2}.$$
\end{lemma}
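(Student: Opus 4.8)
\emph{Proposal.} The plan is to gauge away the phase $\mu_N(t)$, treat the resulting defocusing Hartree equation by standard semilinear Schr\"odinger theory, and extract the uniform-in-$N$ bounds at the end from conservation laws and dispersion. Since $\mu_N(t)$ is real and depends only on $|u(t,\cdot)|^2$, setting $v(t)=\exp\!\big(i\int_0^t\mu_N(s)\d s\big)u(t)$ gives $|v|=|u|$ and reduces \eqref{eq:Hartree-equation} to $i\partial_t v=(-\Delta+w_N*|v|^2)v$ without changing any $L^p$ or $H^s$ norm; I would also record that $\|w_N\|_{L^1}=\|w\|_{L^1}$ is $N$-independent (change of variables $y=N^\beta x$), which is what will eventually keep the constants uniform in $N$. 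For local well-posedness I would apply a contraction-mapping argument to the Duhamel formula
\[ u(t)=e^{it\Delta}u(0)-i\int_0^t e^{i(t-s)\Delta}\,F(u(s))\,\d s,\qquad F(u)=(w_N*|u|^2)\,u, \]
in $C([0,T],H^2(\R^3))$, using that $e^{it\Delta}$ is unitary on every $H^s$, that $H^2(\R^3)$ is a Banach algebra embedded in $L^\infty$, and that convolution with $w_N\in L^1$ is bounded on $H^2$ (Young's inequality commutes with derivatives). These facts make $F$ locally Lipschitz from $H^2$ to $H^2$, yielding a unique maximal solution $u\in C([0,T_{\max}),H^2)$; reading $\partial_t u$ off the equation gives the stated $C^1((0,T_{\max}),L^2)$ regularity.

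Global existence would then follow from conservation of mass $\|u(t)\|_{L^2}$ and energy $E(u)=\|\nabla u\|_{L^2}^2+\frac12\int(w_N*|u|^2)|u|^2$. Because $w\ge 0$ the interaction energy is nonnegative, so $\|\nabla u(t)\|_{L^2}^2\le E(u(0))$, which together with mass conservation gives an a priori $H^1$ bound; the $H^2$ blow-up alternative at any finite $T_{\max}$ is excluded by a Gronwall estimate exploiting the algebra structure. This produces the global solution and settles the first assertion.

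For the second assertion the constants must be $N$-independent. The $W^{\ell,1}$ hypothesis with $\ell$ large gives, through $W^{\ell,1}(\R^3)\hookrightarrow L^2\cap L^\infty$, an $N$-uniform bound on $E(u(0))$ via $\|w_N*|u(0)|^2\|_{L^\infty}\le\|w\|_{L^1}\|u(0)\|_{L^\infty}^2$, hence a uniform $H^1$ bound. The uniform $H^2$ bound would come from $\|u(t)\|_{H^2}\le\|u(0)\|_{H^2}+C\int_0^t\|u(s)\|_{L^\infty}^2\|u(s)\|_{H^2}\d s$ and Gronwall, once $\int_0^\infty\|u(s)\|_{L^\infty}^2\d s$ is finite uniformly in $N$. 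Everything thus reduces to the pointwise decay $\|u(t)\|_{L^\infty}\le C(1+t)^{-3/2}$, which I would obtain by inserting the free dispersive bound $\|e^{it\Delta}f\|_{L^\infty}\le C|t|^{-3/2}\|f\|_{L^1}$ into the Duhamel formula and bootstrapping the quantity $\sup_s(1+s)^{3/2}\|u(s)\|_{L^\infty}$, using the $L^1$ and regularity control from $W^{\ell,1}$ to estimate $\|F(u(s))\|_{L^1}$ and the convergence of $\int_0^t(t-s)^{-3/2}(1+s)^{-3}\d s$.

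Establishing this dispersive decay uniformly in $N$ is the main obstacle. The difficulty is that $w_N$ concentrates ($w_N\to(\int w)\delta_0$ as $N\to\infty$ for $\beta>0$), so the bootstrap must be closed using only the $N$-independent quantity $\int w_N=\int w$ together with the conserved mass and energy, and never the divergent $\|w_N\|_{L^\infty}=N^{3\beta}\|w\|_{L^\infty}$. This is precisely the analysis carried out in \cite{GriMac-13}; propagating sufficiently many $W^{\ell,1}$ derivatives---whence the requirement that $\ell$ be large---is what makes the decay, and therefore all the constants in the lemma, uniform in $N$.
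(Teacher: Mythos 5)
Your proposal cannot be compared against an internal argument, because the paper does not prove this lemma at all: it is imported verbatim from \cite{GriMac-13} (Prop.~3.3 and Cor.~3.4), and your own write-up ultimately defers the only hard step --- the $N$-uniform dispersive decay --- to the same reference. The scaffolding you build around that citation (gauging away $\mu_N(t)$, contraction in $C([0,T],H^2)$ using the algebra property and $\|w_N\|_{L^1}=\|w\|_{L^1}$, globality from mass/energy conservation with $w\ge 0$) is standard and correct, and it is a reasonable reconstruction of what lies behind the quoted result.

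However, the one step you do sketch in detail fails as written. You close the bootstrap for $\sup_s (1+s)^{3/2}\|u(s)\|_{L^\infty}$ by invoking ``the convergence of $\int_0^t (t-s)^{-3/2}(1+s)^{-3}\d s$'', but this integral \emph{diverges}: the kernel $(t-s)^{-3/2}$ is not locally integrable at the endpoint $s=t$, no matter how fast the second factor decays. The standard repair is to split the Duhamel integral at $|t-s|\le 1$: on the far region one uses the $L^1\to L^\infty$ bound $\|e^{i(t-s)\Delta}f\|_{L^\infty}\le C|t-s|^{-3/2}\|f\|_{L^1}$ as you propose, while on the near region one must use a non-singular (or integrably singular) estimate, e.g.\ Sobolev embedding $\|e^{i(t-s)\Delta}F\|_{L^\infty}\le C\|F\|_{H^2}$, or an $L^{p'}\to L^p$ dispersive bound with decay exponent $3(1/2-1/p)<1$. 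This forces one to propagate decay of \emph{derivatives} of $u$ simultaneously, which is precisely where the hypothesis that $\ell$ be large enters, and it is the content of the analysis in \cite{GriMac-13} that you (and the paper) cite. A second, smaller omission: the claimed bound $\|\partial_t u(t,\cdot)\|_{L^\infty}\le C(1+t)^{-3/2}$ is never addressed; since $\partial_t u = i(\Delta u -(w_N*|u|^2)u+\mu_N u)$, it requires pointwise decay of $\Delta u$, again a higher-order dispersive estimate rather than a consequence of the $L^\infty$ decay of $u$ alone.
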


As in \cite[Sec. 2.3]{LewNamSerSol-15},  any vector $\Psi\in \gH^N$  can be written uniquely as 
\begin{equation*}
\Psi=\sum_{n=0}^N u(t)^{\otimes (N-n)} \otimes_s \varphi_n  = \sum_{n=0}^N \frac{(a^*(u(t)))^{N-n}}{\sqrt{(N-n)!}} \varphi_n
\end{equation*}
with $\varphi_n \in \gH_+(t)^{n}$. This gives rise to the  unitary operator $U_{N}(t): \gH^N \to \1^{\le N} \cF_+(t)$
\begin{equation*}
U_N(t) \Psi = \varphi_0\oplus \varphi_1 \oplus\cdots \oplus \varphi_N.
\end{equation*}
Here $\1^{\le N}$ for the projection onto $\mathbb{C} \oplus \gH \oplus \cdots \oplus \gH^N$. Some fundamental properties of $U_N(t)$ can be found in \cite[Proposition 4.2]{LewNamSerSol-15} and \cite[Lemma 6]{LewNamSch-15}.

Next, as in \cite{LewNamSch-15}, we introduce $\Phi_N(t):=U_N(t) \Psi_N(t)$ and rewrite Schr\"odinger equation  as 
\bq \label{eq:eq-PhiNt}
i \partial_t \Phi_N(t)  =  \widetilde H_N (t)   \Phi_N(t), \quad  \Phi_N(0) = \1^{\le N} \Phi(0).
\eq
Here $
\widetilde H_N (t)=  \1^{\le N} \Big[ \bH(t) + \frac{1}{2}\sum_{j=0}^4 ( R_{j} + R_j^*) \Big] \1^{\le N}
$
with 
\begin{align*}
R_{0}&= \d\Gamma(Q(t)[w_N*|u(t)|^2+ \widetilde{K}_1(t) -\mu_N(t)]Q(t))\frac{1-\cN}{N-1},\\
R_{1}&=-2\frac{\cN\sqrt{N-\cN}}{N-1} a(Q(t)[w_N*|u(t)|^2]u(t)),\\
R_{2}&= \iint  K_2(t,x,y) a^*_x a^*_y \d x \d y \left(\frac{\sqrt{(N-\cN)(N-\cN-1)}}{N-1}-1\right),\\
R_{3}& =\frac{\sqrt{N-\cN}}{N-1}\iiiint( 1 \otimes Q(t) w_N Q(t)\otimes Q(t))(x,y;x',y')\times \\
& \qquad \qquad \qquad \qquad \qquad \qquad \qquad  \times \overline{u(t,x)} a^*_y a_{x'} a_{y'} \,\d x \d y \d x' \d y',\\
R_{4}&= \frac{1}{2(N-1)}\iiiint({Q(t)}\otimes{Q(t)}w_N Q(t)\otimes Q(t))(x,y;x',y')\times \\
&  \qquad \qquad \qquad \qquad \qquad \qquad \qquad  \times a^*_x a^*_y a_{x'} a_{y'} \,\d x \d y \d x' \d y'.
\end{align*}
(In $R_0$ and $R_1$ we write $w_N$ for the function $w_N(x)$, while in $R_3$ and $R_4$ we write $w_N$ for the two-body multiplication operator $w_N(x-y)$.) 

The idea of Bogoliubov approximation is that when $N\to \infty$ all error terms $R_j$'s are so small that we can ignore them and replace \eqref{eq:eq-PhiNt} by Bogoliubov equation \eqref{eq:Bogoliubov-equation}. Some important properties of this equation are collected in the following

\begin{lemma}[Bogoliubov equation] \label{lem:Bogoliubov-equation} (i) If $\Phi(0)$ belongs to the quadratic form domain $\mathcal{Q}(\dGamma (1-\Delta))$, then equation \eqref{eq:Bogoliubov-equation} has a unique global solution in $\mathcal{Q}(\dGamma (1-\Delta))$.
Moreover, the pair of density matrices $(\gamma_{\Phi(t)},\alpha_{\Phi(t)})$ is the unique solution to
\bq \label{eq:linear-Bog-dm} 
\left\{
\begin{aligned}
i\partial_t \gamma &= h \gamma - \gamma h + K_2 \alpha - \alpha^* K_2^*, \\
i\partial_t \alpha &= h \alpha + \alpha h^{\rm T} + K_2  + K_2 \gamma^{\rm T} + \gamma K_2,\\
\gamma(t&=0)=\gamma_{\Phi(0)}, \quad \alpha(t=0)  = \alpha_{\Phi(0)}.
\end{aligned}
\right.
\eq
(ii) If we assume further that $\Phi(0)$ is a quasi-free state in $\cF_+(0)$, then $\Phi(t)$ is a quasi-free state in $\cF_+(t)$ for all $t>0$ and 
\bq \label{eq:Bog-N}
\langle \Phi(t), \cN \Phi(t) \rangle \le C \Big( \langle \Phi(0),\cN \Phi(0)\rangle^2 + [\log(2+t)]^2\Big).
\eq
\end{lemma}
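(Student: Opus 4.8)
The plan is to treat \eqref{eq:Bogoliubov-equation} as the dynamics generated by a time-dependent quadratic Hamiltonian and to exploit that, at the one-particle level, such a dynamics is a linear symplectic flow. For part (i) I would first reduce the well-posedness to a linear problem on the one-particle space. Writing the Heisenberg evolution of the fields $a(f),a^*(f)$ under $\bH(t)$, the pair $(\gamma_{\Phi(t)},\alpha_{\Phi(t)})$ evolves according to a closed linear system whose generator is built from $h(t)$ and $K_2(t)$; this is precisely \eqref{eq:linear-Bog-dm}. The coefficients are controlled by Lemma \ref{lem:Hartree-equation}: the kinetic part $-\Delta$ generates a free unitary flow, while the potential part of $h(t)$ and the pairing kernel $K_2(t)$ enter as perturbations that are bounded in operator norm, uniformly in $t$ on finite intervals. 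I would construct the solution by first solving the regularized problem with $\cN$ truncated to $\cN\le M$, so that $\bH(t)$ is bounded and the flow exists trivially, then derive an a priori bound on $\langle\Phi(t),\dGamma(1-\Delta)\Phi(t)\rangle$ by differentiating in time and estimating the commutator $[\bH(t),\dGamma(1-\Delta)]$: the diagonal part contributes $\dGamma([h(t),-\Delta])$, controlled because the potential terms in $h(t)$ have bounded derivatives, while the pairing part is handled by Cauchy--Schwarz. A Gr\"onwall argument gives a bound uniform in $M$, and passing to the limit yields a solution in $\mathcal{Q}(\dGamma(1-\Delta))$. Uniqueness follows from the same energy estimate applied to the difference of two solutions, and the equivalence with \eqref{eq:linear-Bog-dm} follows because that linear system has a Lipschitz right-hand side on the relevant space of density matrices, hence a unique solution, which must agree with $(\gamma_{\Phi(t)},\alpha_{\Phi(t)})$.

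For the first assertion of part (ii) I would use that the solution map of \eqref{eq:Bogoliubov-equation} is the second-quantized (metaplectic) implementation of the symplectic flow solving the one-particle system. Since a Bogoliubov transformation acts linearly on the field operators, it preserves Wick's theorem: if $\Phi(0)$ is quasi-free, then all higher correlation functions of $\Phi(t)$ factorize through $(\gamma_{\Phi(t)},\alpha_{\Phi(t)})$, so $\Phi(t)$ is a quasi-free state in $\cF_+(t)$ for every $t>0$.

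For the bound \eqref{eq:Bog-N}, set $f(t)=\langle\Phi(t),\cN\Phi(t)\rangle=\Tr\gamma_{\Phi(t)}$. Since $\dGamma(h(t))$ commutes with $\cN$, only the pairing term survives in $f'(t)$, giving $f'(t)=2\,\mathrm{Im}\,\Tr\!\big(K_2(t)\,\alpha_{\Phi(t)}\big)$. The decisive point is to bound this $N$-uniformly. Writing $K_2(t)=Q(t)^{\otimes 2}M_{u(t)}\,C_{w_N}\,M_{u(t)}$, where $M_{u(t)}$ is multiplication by $u(t)$ and $C_{w_N}$ is convolution by $w_N$, Young's inequality gives $\|C_{w_N}\|_{L^2\to L^2}\le\|w\|_{L^1}$ uniformly in $N$, hence $\|K_2(t)\|_{\mathrm{op}}\le\|w\|_{L^1}\|u(t)\|_{L^\infty}^2\le C(1+t)^{-3}$ by Lemma \ref{lem:Hartree-equation}. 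Estimating the pairing term through this operator norm, rather than the Hilbert--Schmidt norm which blows up like $N^{3\beta/2}$, together with the quasi-free relation $\alpha\alpha^*\le\gamma(1+\gamma)$ to control $\alpha$ in terms of $\gamma$, and propagating $\alpha(t)$ through its own equation in \eqref{eq:linear-Bog-dm} (whose inhomogeneity $K_2(t)$ is again $N$-uniformly bounded and integrable in time), I would close a coupled Gr\"onwall inequality for $(f,\|\alpha\|)$. The integrable time weight $\|u(t)\|_{L^\infty}^2$ and the quadratic dependence on the initial data of the $\alpha$-contribution produce exactly the $\langle\Phi(0),\cN\Phi(0)\rangle^2+[\log(2+t)]^2$ structure.

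The main obstacle is precisely this $N$-uniformity. Because $w_N$ concentrates on the scale $N^{-\beta}$, any estimate that measures $K_2(t)$ in Hilbert--Schmidt norm, or that routes the pairing through the two-body energy $\langle\Phi,\iint w_N(x-y)a_x^*a_y^*a_xa_y\,\Phi\rangle$, loses powers of $N$. The plan is therefore to funnel every estimate through the operator-norm bound on $C_{w_N}$ supplied by Young's inequality and the $L^\infty$-decay of the condensate $u(t)$, so that all constants remain independent of $N$.
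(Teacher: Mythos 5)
Parts (i) and the quasi-freeness claim in (ii) of your proposal are essentially fine. Your derivation of \eqref{eq:linear-Bog-dm} from the Heisenberg evolution of $a^*_ya_x$, $a_xa_y$ is the same computation as the paper's, and for well-posedness the paper simply cites \cite[Theorem 7]{LewNamSch-15}, so your truncation-plus-energy-estimate sketch is an acceptable substitute. For preservation of quasi-freeness you take a genuinely different route: you invoke the metaplectic implementation of the one-particle symplectic flow, which requires the (standard but nontrivial, cf. \cite{BruDer-07}) fact that the propagator of a time-dependent quadratic Hamiltonian acts on the fields as a Bogoliubov transformation. The paper instead sets $X=\gamma+\gamma^2-\alpha\alpha^*$, $Y=\gamma\alpha-\alpha\gamma^{\rm T}$, derives from \eqref{eq:linear-Bog-dm} a linear \emph{homogeneous} system for $(X,Y)$, and concludes $X\equiv Y\equiv 0$ by Gr\"onwall using only that the operator norm $\|K_2(t)\|$ is bounded uniformly in $t$ and $N$. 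The paper's argument is more elementary and self-contained; yours is more conceptual but needs the implementation theory as an extra input.

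The genuine gap is in your proof of \eqref{eq:Bog-N}, exactly at the point you yourself call decisive. You propose to bound the inhomogeneous pairing term $\Tr(K_2\alpha^*)$ using the operator-norm bound $\|K_2(t)\|\le\|w\|_{L^1}\|u(t)\|_{L^\infty}^2$ ``together with the quasi-free relation $\alpha\alpha^*\le\gamma(1+\gamma)$ to control $\alpha$ in terms of $\gamma$''. But by H\"older's inequality for Schatten norms, the operator norm of $K_2$ pairs with the \emph{trace} norm of $\alpha$: $|\Tr(K_2\alpha^*)|\le\|K_2\|\,\Tr|\alpha|$; there is no inequality of the form $|\Tr(K_2\alpha^*)|\le\|K_2\|\,\|\alpha\|_{\rm HS}$. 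The quasi-free relation controls only the Hilbert--Schmidt norm, $\|\alpha\|_{\rm HS}^2=\Tr(\gamma+\gamma^2)$, and never the trace norm: for a quasi-free state $\Tr|\alpha|=\Tr\big((\gamma+\gamma^2)^{1/2}\big)\ge\Tr\,\gamma^{1/2}$, so a $\gamma$ with $n$ eigenvalues equal to $1/n$ has $\langle\cN\rangle=\Tr\gamma=1$ while $\Tr|\alpha|\ge\sqrt{n}$ is arbitrarily large. Consequently your coupled Gr\"onwall for $(f,\|\alpha\|)$ cannot close: in any pairing $|\Tr(K_2\alpha^*)|\le\|K_2\|_{S^p}\|\alpha\|_{S^q}$ with $1/p+1/q=1$, the requirement that the $\alpha$-factor be controlled by $\langle\cN\rangle$ forces $q\ge2$, hence $p\le 2$, hence $\|K_2\|_{S^p}\ge\|K_2\|_{\rm HS}\sim N^{3\beta/2}$ --- precisely the quantity you forswore. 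The same mismatch reappears when you ``propagate $\alpha$ through its own equation'': propagating $\|\alpha\|_{\rm HS}$ requires the inhomogeneity $K_2$ in HS norm (again $N^{3\beta/2}(1+t)^{-3/2}$), while the operator norm of $K_2$ only propagates $\|\alpha\|$, which does not bound $\langle\cN\rangle$. A final sanity check: if every time weight in your scheme were the integrable $\|u(t)\|_{L^\infty}^2\sim(1+t)^{-3}$, the resulting bound would be uniform in $t$, whereas \eqref{eq:Bog-N} grows like $[\log(2+t)]^2$; the logarithm signals a borderline weight $\sim(1+t)^{-1}$ that your accounting never produces.

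Note the contrast with the paper's own use of $\|K_2(t)\|$: the operator norm suffices for the $(X,Y)$ system because that system is homogeneous, so every term contains two HS factors; it fails for \eqref{eq:Bog-N} because the $\alpha$-equation contains $K_2$ as a lone inhomogeneity. For \eqref{eq:Bog-N} the paper runs the Gr\"onwall on $\|\gamma\|_{\rm HS}^2+\|\alpha\|_{\rm HS}^2$ and then uses $\langle\cN\rangle=\Tr\gamma\le\|\alpha\|_{\rm HS}^2=\Tr(\gamma+\gamma^2)$, deferring to \cite{NamNap-15} the $N$-uniform treatment of the inhomogeneous term; as the duality argument above shows, that treatment must exploit the specific structure of $K_2(t,x,y)=u(t,x)w_N(x-y)u(t,y)$ (its concentration at scale $N^{-\beta}$ played against the dispersion of the Hartree flow), and cannot be reduced to a fixed-time Young/Schatten-norm bound. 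That mechanism is the idea missing from your plan.
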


Recall that $\gamma_{\Phi(t)}:\gH \to \gH$, $\alpha_{\Phi(t)}:\overline{\gH} \equiv \gH^* \to {\gH}$ are operators with kernels
$\gamma_{\Phi(t)}(x,y)=\langle \Phi(t), a_y^* a_x \Phi(t) \rangle$, $\alpha_{\Phi(t)}(x,y)=\langle \Phi(t), a_x a_y \Phi(t) \rangle
$ and $K_2: \overline{\gH} \equiv \gH^*\to \gH$ is an operator with kernel $K_2(t,x,y)$. Note that \eqref{eq:linear-Bog-dm} is similar (but not identical) to the effective equations used in  the Fock space setting in  \cite{GriMac-13,Kuz-15b}. 
 
\begin{proof} \smartqed (i) For existence and uniqueness of $\Phi(t)$, we refer to \cite[Theorem 7]{LewNamSch-15}. To derive \eqref{eq:linear-Bog-dm}, we use \eqref{eq:Bogoliubov-equation} to compute 
\begin{align*}
&i\partial_t \gamma_{\Phi(t)}(x',y') = i\partial_t \langle \Phi(t), a_{y'}^* a_x \Phi(t) \rangle = \langle \Phi(t), [a_{y'}^* a_x , \bH(t)] \Phi(t) \rangle \\
&= \iint h(t,x,y) \Big(\delta (x'-x) \gamma_{\Phi(t)}(y,y') - \delta (y'-y) \gamma_{\Phi(t)}(x',x) \Big) \d x  \d y \\
& + \frac{1}{2} \iint k(t,x,y) \Big( \delta(x'-x) \alpha_{\Phi(t)}^*(y,y') + \delta (x'-y) \alpha_{\Phi(t)}^*(y',x) \Big) \d x  \d y \\
& - \frac{1}{2} \iint k^*(t,x,y) \Big( \delta(y'-y) \alpha_{\Phi(t)}(x,x') + \delta (y'-x) \alpha_{\Phi(t)}(y,x') \Big) \d x  \d y \\
& = \Big( h(t) \gamma_{\Phi(t)} - \gamma_{\Phi(t)} h(t) + K_2(t)\alpha^*_{\Phi(t)} - \alpha_{\Phi(t)} K_2^*(t) \Big)(x',y')
\end{align*}
This is the first equation in  \eqref{eq:linear-Bog-dm}. The second equation is proved similarly. 

\medskip

\noindent (ii) Now we show that if $\Phi(0)$ is a quasi-free state, then $\Phi(t)$ is a quasi-free state for all $t>0$. We will write $(\gamma,\alpha)=(\gamma_{\Phi(t)}, \alpha_{\Phi(t)})$ for short. Let us introduce 
$$X:=\gamma+ \gamma^2 -\alpha \alpha^*, \quad Y:= \gamma \alpha - \alpha \gamma^{\rm T}.$$
It is a general fact (see, e.g., \cite[Lemma 8]{NamNap-15}) that $\Phi(t)$ is a quasi-free state if and only if $X(t)=0$ and $Y(t)=0$. In particular, we have $X(0)=0$ and $Y(0)=0$ by the assumption on $\Phi(0)$. Using \eqref{eq:linear-Bog-dm} it is straightforward to see that  
\begin{align*}
i\partial_t X &= h X- X h + k Y^* - Y k^* ,\\
i\partial_t X^2 &= (i\partial_t X) X + X (i\partial_t X) = hX^2 - X^2 h + (K_2Y^* - Y K_2^*)X + X(K_2Y^* - Y K_2^*).
\end{align*}
Then we take the trace and use $\Tr(h X^2 - X^2 h)=0$ ($h X^2$ and $X^2 h$ may be not trace class but we can introduce a cut-off; see \cite{NamNap-15} for details). We find that
$$
\|X(t)\|_{\rm HS}^2  \le 4 \int_0^t \|K_2(s)\|\cdot \|X(s)\|_{\rm HS}\cdot \|Y(s)\|_{\rm HS} \,  \d s
$$
We also obtain a similar bound for $\|Y(t)\|_{\rm HS}$. Then summing these estimates and using the fact that $\|K_2(t)\|$ is bounded uniformly in time, we conclude by Gr\"onwall's inequality that $X(t)=0$, $Y(t)=0$ for all $t>0$ . 

A similar argument can be used to the uniqueness of solutions to \eqref{eq:linear-Bog-dm}. 

To obtain \eqref{eq:Bog-N}, we first  estimate $\|\alpha\|_{\rm HS}^2+\|\gamma\|_{\rm HS}^2$ by a Gr\"onwall-type inequality, and then use the identity $\|\alpha\|_{\rm HS}^2=\Tr(\gamma+\gamma^2)$. We refer to \cite{NamNap-15} for details. \qed
\end{proof}

\section{Kinetic bounds} \label{sec:kinetic-bound}

In this section, we discuss  Theorem \ref{lem:HN-kinetic}. As mentioned, it is equivalent to \eqref{eq:CS-1} and in case $\beta<2/3$ it implies \eqref{eq:cv-tr-gamma1}. Let us explain the implication from \eqref{eq:CS-1} to \eqref{eq:cv-tr-gamma1} in more details. We will write $P(t)=|u(t)\rangle \langle u(t)|$ for short. We can decompose
\begin{align*}
 N^{-1}\gamma_{\Psi_N(t)}^{(1)} - P(t)  &= N^{-1}Q(t) \gamma_{\Psi_N(t)}^{(1)} Q(t)  - N^{-1}\Tr \Big(Q(t) \gamma_{\Psi_N(t)}^{(1)} Q(t)\Big) P(t) \\
 & \quad +  N^{-1}Q(t) \gamma_{\Psi_N(t)}^{(1)} P(t) +  N^{-1}P(t) \gamma_{\Psi_N(t)}^{(1)} Q(t)  
\end{align*}
and use the triangle inequality of the trace norm to estimate
\begin{align} \label{eq:decomp}
&\Tr \Big| \sqrt{1-\Delta} \Big(N^{-1}\gamma_{\Psi_N}^{(1)} -  |u(t) \rangle \langle u(t)| \Big)\sqrt{1-\Delta} \Big| \nn  \\
& \le N^{-1}\Tr \Big( \sqrt{1-\Delta}Q(t) \gamma_{\Psi_N(t)}^{(1)}Q(t) \sqrt{1-\Delta} \Big) + N^{-1}\Tr \Big(Q(t) \gamma_{\Psi_N(t)}^{(1)} Q(t)\Big) \|u(t,\cdot)\|_{H^1}^2 \nn\\
&\quad + 2 N^{-1} \Tr \Big| \sqrt{1-\Delta} Q(t) \gamma_{\Psi_N(t)}^{(1)} P(t) \sqrt{1-\Delta} \Big|  .\end{align}
Using the Cauchy-Schwarz inequality (for Schatten norm)
\begin{align*}
&\Tr  \left| (1-\Delta)^{1/2}Q(t) \gamma_{\Psi_N(t)}^{(1)} P(t) (1-\Delta)^{1/2} \right| \\
&\le \Big\| (1-\Delta)^{1/2}Q(t) \Big(\gamma_{\Psi_N(t)}^{(1)} \Big)^{1/2} \Big\|_{\rm HS} \cdot \Big\| \Big(\gamma_{\Psi_N(t)}^{(1)} \Big)^{1/2} \Big\| \cdot \Big\|P(t) (1-\Delta)^{1/2}\Big\|_{\rm HS}
\end{align*}
we deduce from \eqref{eq:CS-1} and \eqref{eq:decomp} that for all $\eps>0$,
\bq \label{eq:CS--a}
\Tr \Big| \sqrt{1-\Delta} \Big(N^{-1}\gamma_{\Psi_N}^{(1)} -  |u(t) \rangle \langle u(t)| \Big)\sqrt{1-\Delta} \Big| \le C_\eps N^{a+\eps}
\eq
where $a=\max\{ \beta-1, (\beta-1)/2, 3\beta-2, (3\beta-2)/2 \}$. If  $\beta<2/3$, then \eqref{eq:cv-tr-gamma1} holds.

Now we turn to another version of Theorem \ref{lem:HN-kinetic}. From the definition $\Phi_N(t)=U_N(t)\Psi_N(t)$, we can check  that  $
Q(t)\gamma_{\Psi_N}^{(1)}Q(t)= \gamma_{\Phi_N}^{(1)}
$ (e.g. by using \cite[Proposition 4.2]{LewNamSerSol-15}). Thus Theorem \ref{lem:HN-kinetic} is equivalent to 

\begin{theorem}[Kinetic estimate] \label{lem:wHN-kinetic} Let $\Phi_N(t)$ be as in \eqref{eq:eq-PhiNt}, with $\Phi(0)$ as in Theorem \ref{thm:main}. Then for all $\eps>0$ and all $t>0$, we have 
\bq \label{eq:new-kinetic}
\big \langle \Phi_N(t), \dGamma(1-\Delta)  \Phi_N(t) \big\rangle \le  C_\eps  (N^{\beta+\eps}+N^{3\beta-1+\eps}).
\eq
\end{theorem}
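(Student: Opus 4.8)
The plan is to prove the equivalent statement of Theorem~\ref{lem:wHN-kinetic} by a Grönwall argument for the kinetic functional
\[
\mathcal{E}(t):=\big\langle \Phi_N(t),\,\dGamma(1-\Delta)\,\Phi_N(t)\big\rangle .
\]
Since $\dGamma(1-\Delta)$ commutes with $\cN$ and hence with $\1^{\le N}$, differentiating along \eqref{eq:eq-PhiNt} gives
\[
\frac{d}{dt}\mathcal{E}(t)=\big\langle \Phi_N(t),\, i\big[\widetilde H_N(t),\,\dGamma(1-\Delta)\big]\,\Phi_N(t)\big\rangle ,
\]
and because $[\dGamma(-\Delta),\dGamma(1-\Delta)]=0$, only the potential part of $h(t)$, the pairing terms built from $K_2(t)$ (in $\bH(t)$ and in $R_2$), and the interaction remainders $R_0,R_1,R_3,R_4$ actually contribute. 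As $\mathcal{E}(t)$ may a priori be infinite, I would first carry out the computation with a regularized weight such as $\dGamma\big((1-\Delta)(1+\eta(1-\Delta))^{-1}\big)$, derive bounds uniform in the cutoff $\eta>0$, and let $\eta\to0$ at the end --- the same device used in the proof of Lemma~\ref{lem:Bogoliubov-equation}.

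Before estimating the commutator I would fix the a priori inputs that close the loop: propagating the assumptions \eqref{eq:assumption-Phi0} through the dynamics \eqref{eq:eq-PhiNt} as in \cite{LewNamSch-15,NamNap-15} to obtain moment bounds $\langle\Phi_N(t),(\cN+1)^k\Phi_N(t)\rangle\le C_{k,\eps}N^{\eps}$ for the finitely many $k$ needed below. With these at hand the commutator terms split into two groups. The terms coming from $[\,-\Delta,\,|u(t)|^2*w_N+Q(t)\widetilde K_1(t)Q(t)\,]$ and from the cubic and linear remainders $R_1,R_3$ each carry an explicit factor of $u(t)$; using $\|u(t)\|_{H^2}\le C$ and, crucially, the dispersive decay $\|u(t)\|_{L^\infty}+\|\partial_t u(t)\|_{L^\infty}\le C(1+t)^{-3/2}$ from Lemma~\ref{lem:Hartree-equation}, these are bounded by $a(t)\mathcal{E}(t)+b(t)$ with $a,b$ integrable in time and $\int_0^\infty b\lesssim N^{\beta+\eps}$; this is the origin of the $N^{\beta+\eps}$ contribution.

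The delicate terms are the pairing contribution $-\tfrac{i}{2}\iint K_2(t,x,y)\big((1-\Delta)_x+(1-\Delta)_y\big)a_x^*a_y^*\,\d x\,\d y+\text{h.c.}$ and the commutator $i[\dGamma(-\Delta),R_4]$, which is built from $[\,-\Delta_x-\Delta_y,\,w_N(x-y)\,]$ and hence from $\nabla w_N\sim N^{\beta}$. The plan is to bound both by a Cauchy--Schwarz that reconstructs $\dGamma(1-\Delta)$, absorbing a small multiple $\delta\,\mathcal{E}(t)$ into the left-hand side and estimating the remainder by Sobolev/Hardy inequalities together with the $\cN$-moment bounds; the $1/(N-1)$ prefactor of $R_4$ against the $N^{3\beta}$ size of $w_N$ and $\langle(\cN+1)^2\rangle\lesssim N^{\eps}$ is what produces the $N^{3\beta-1+\eps}$ term, while the same prefactor keeps the coefficient of $\mathcal{E}(t)$ of order $N^{\eps-1}$ so that the resulting Grönwall exponential stays bounded as $N\to\infty$. \emph{The main obstacle is precisely this step}: one must extract the sharp powers of $N$ from the singular-potential commutators without losing derivatives, since a naive estimate that places full derivatives on the concentrated kernels $K_2$ or $w_N$ overcounts powers of $N^{\beta}$; the positivity of $w$ and the $Q(t)$-projections (whose derivatives again return decaying $u(t)$-factors) have to be used to avoid this. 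Once every term is of the form $\big(\delta+a(t)\big)\mathcal{E}(t)+b(t)$ with $\int_0^\infty a<\infty$ and $\int_0^\infty b\lesssim N^{\beta+\eps}+N^{3\beta-1+\eps}$, Grönwall's inequality together with the initial bound $\mathcal{E}(0)\le C_\eps N^{\beta+\eps}$ inherited from \eqref{eq:assumption-Phi0} yields \eqref{eq:new-kinetic}.
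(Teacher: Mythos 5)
Your proposal takes a genuinely different route from the paper, and it has a gap at exactly the step you yourself flag as ``the main obstacle.'' Running Gr\"onwall on $\mathcal{E}(t)=\langle \Phi_N(t),\dGamma(1-\Delta)\Phi_N(t)\rangle$ forces you to estimate $i[\widetilde H_N(t),\dGamma(1-\Delta)]$, in particular $i[R_4,\dGamma(-\Delta)]$ and the commutator of $\dGamma(-\Delta)$ with the pairing terms built from $K_2$. These commutators are genuinely worse than the terms they come from: $[-\Delta_x-\Delta_y,\,w_N(x-y)]$ produces $(\Delta w_N)(x-y)+(\nabla w_N)(x-y)\cdot(\nabla_x-\nabla_y)$, and every derivative landing on the concentrated kernel costs a factor $N^{\beta}$ ($\|\nabla w_N\|_{L^1}\sim N^{\beta}$, $\|\Delta w_N\|_{L^1}\sim N^{2\beta}$, and likewise derivatives on $K_2$ cost $N^{\beta}$ --- note the paper's Sobolev-type bound puts $(1-\Delta_x)^{-1/2}$, a smoothing factor, on $K_2$, never a roughening one). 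If instead you shield $w_N$ from derivatives by Cauchy--Schwarz, the gradient must land on $\Phi_N$, and you obtain quantities with two derivatives distributed over two different particles, of the type $\sum_{i\neq j}\langle\nabla_i\Phi_N,(1-\Delta_j)\nabla_i\Phi_N\rangle\sim\langle\dGamma(-\Delta)\,\dGamma(1-\Delta)\rangle$: these are \emph{not} bounded by $\delta\mathcal{E}(t)+b(t)$, so the Gr\"onwall loop does not close at the level of $\mathcal{E}(t)$ but opens a hierarchy. Your suggested remedies do not repair this: positivity of $w$ gives no sign information on a commutator, and $Q(t)$ only removes the condensate mode --- it does not smooth the $N^{-\beta}$-scale singularity of $w_N$. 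Moreover, your a priori input $\langle\Phi_N(t),(\cN+1)^k\Phi_N(t)\rangle\le C_{k,\eps}N^{\eps}$ is itself unjustified and quasi-circular: $\Phi_N(t)$ is \emph{not} quasi-free (only $\Phi(t)$ is), so \eqref{eq:moment-quasi-free} does not apply to it, and propagating even the first moment of $\cN$ along \eqref{eq:eq-PhiNt} already requires the kinetic bound, since $i[\bH(t),\cN]$ and $i[R_j+R_j^*,\cN]$ are only controlled through $\eta\,\dGamma(1-\Delta)+C\eta^{-1}(1+\cN)(1+t)^{-3}$, cf. \eqref{eq:BogHam-bounds3} and Lemma \ref{lem:Rj}.

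The paper's proof is designed precisely so that $\dGamma(-\Delta)$ is never commuted with the singular terms. The Gr\"onwall functional is not the kinetic energy but the energy itself, $\langle\Phi_N(t),\widetilde H_N(t)\Phi_N(t)\rangle$: since $\Phi_N(t)$ evolves by $\widetilde H_N(t)$, the commutator contribution vanishes identically and \eqref{eq:PhiN-kin-0} contains only the explicit derivative $\partial_t\widetilde H_N(t)$, which hits the coefficients built from $u(t)$ and never $w_N$ in the relative variable; the dispersive decay of Lemma \ref{lem:Hartree-equation} then yields the integrable-in-time bounds \eqref{eq:BogHam-bounds2} and those of Lemma \ref{lem:Rj}, with no extra powers of $N^{\beta}$. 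The kinetic estimate is \emph{recovered} from the energy bound by coercivity, \eqref{eq:BogHam-bounds1} together with Lemma \ref{lem:Rj}, i.e.
\begin{equation*}
\big\langle \Phi_N(t), \widetilde H_N(t) \Phi_N(t) \big\rangle \ge \tfrac{1}{2} \big\langle \Phi_N(t), (\dGamma(1-\Delta) + R_4) \Phi_N(t) \big\rangle - C_\eps \Big(N^{\beta+\eps} + \big\langle \Phi_N(t), \cN  \Phi_N(t) \big\rangle \Big),
\end{equation*}
coupled with a second Gr\"onwall bound for $\langle\Phi_N(t),\cN\Phi_N(t)\rangle$ (commutators with $\cN$ are harmless: they create no derivatives on any kernel). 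Note also that the term $N^{3\beta-1+\eps}$ in \eqref{eq:new-kinetic} enters only through the \emph{initial} energy, via $R_4\le CN^{3\beta-1}\cN^2$ and the quasi-free moment bound \eqref{eq:moment-quasi-free} applied to $\Phi(0)$, not through any commutator as in your scheme. If you wish to keep the skeleton of your argument, the repair is exactly this substitution: take the energy (supplemented by $\langle\cN\rangle$) as the Lyapunov quantity, after which your problematic step collapses to the bounds already stated in \eqref{eq:BogHam-bounds1}--\eqref{eq:BogHam-bounds3} and Lemma \ref{lem:Rj}.
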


Before proving Theorem \ref{lem:wHN-kinetic}, let us start with a simpler bound.

\begin{lemma}[Bogoliubov kinetic bound] \label{lem:bH-kinetic} Let $\Phi(t)$ be as in Theorem \ref{thm:main}. Then 
$$
\big \langle \Phi(t), \dGamma(1-\Delta)  \Phi(t) \big\rangle \le  C_\eps  N^{\beta+\eps} ,\quad \forall t>0.
$$
\end{lemma}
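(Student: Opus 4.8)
The plan is to run an energy method rather than to differentiate the kinetic energy directly. Since $\Phi(t)$ is a quasi-free state solving the Bogoliubov equation (Lemma~\ref{lem:Bogoliubov-equation}(ii)), the quantity we must control is $F(t):=\langle\Phi(t),\dGamma(1-\Delta)\Phi(t)\rangle=\Tr\big((1-\Delta)\gamma_{\Phi(t)}\big)$, and the pair $(\gamma,\alpha)=(\gamma_{\Phi(t)},\alpha_{\Phi(t)})$ obeys \eqref{eq:linear-Bog-dm}. Differentiating $F$ itself would force the Laplacian through the singular kernels $\widetilde{K}_1,\widetilde{K}_2$ and worsen the power of $N$, so instead I would monitor the Bogoliubov energy $\mathcal{E}(t):=\langle\Phi(t),\bH(t)\Phi(t)\rangle$. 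Two ingredients drive the argument: first, $\mathcal{E}(t)$ is coercive with respect to $F(t)$ up to errors of size $N^{\beta+\eps}$; second, $\frac{d}{dt}\mathcal{E}(t)=\langle\Phi(t),\dot{\bH}(t)\Phi(t)\rangle$ sees only the time derivatives of the coefficients of $\bH$, which decay dispersively. Throughout I use $\langle\Phi(t),\cN\Phi(t)\rangle\le C_\eps N^\eps$ and hence $\|\gamma(t)\|\le\langle\Phi(t),\cN\Phi(t)\rangle\le C_\eps N^\eps$ from \eqref{eq:Bog-N}.

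For the coercivity I would write $h=(1-\Delta)+(|u|^2*w_N-\mu_N-1)+Q\widetilde{K}_1 Q$; the middle bracket and $Q\widetilde{K}_1 Q$ are bounded operators, so $\Tr(h\gamma)\ge F-C\langle\Phi,\cN\Phi\rangle$. Using the quasi-free identity $\langle\Phi,\bH\Phi\rangle=\Tr(h\gamma)+\mathrm{Re}\,\langle\widetilde{K}_2,\alpha\rangle_{\rm HS}$ (the projections $Q\otimes Q$ are absorbed since $\alpha$ lives on the excited space), I would estimate the pairing term by Cauchy--Schwarz with the splitting $(1-\Delta)^{1/2}\otimes 1$ versus $(1-\Delta)^{-1/2}\otimes 1$,
$$\big|\langle\widetilde{K}_2,\alpha\rangle_{\rm HS}\big|\le\big\|(1-\Delta)^{1/2}\!\otimes 1\,\alpha\big\|_{\rm HS}\,\big\|(1-\Delta)^{-1/2}\!\otimes 1\,\widetilde{K}_2\big\|_{\rm HS}.$$
The first factor squared equals $\Tr\big((1-\Delta)\alpha\alpha^*\big)=\Tr\big((1-\Delta)(\gamma+\gamma^2)\big)\le(1+\|\gamma\|)F$ by the quasi-free relation $\alpha\alpha^*=\gamma+\gamma^2$ used in Lemma~\ref{lem:Bogoliubov-equation}. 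The second factor is the crucial one: a Fourier computation using $\widehat{w_N}(\xi)=\hat{w}(\xi/N^\beta)$, so that $w_N$ concentrates on $|\xi|\lesssim N^\beta$, gives $\|(1-\Delta)^{-1/2}\otimes 1\,\widetilde{K}_2\|_{\rm HS}^2\le C N^\beta$ — the resolvent supplies a factor $N^{-2\beta}$ that beats the $N^{3\beta}$ of $\|\widetilde{K}_2\|_{\rm HS}^2$. Hence $|\langle\widetilde{K}_2,\alpha\rangle_{\rm HS}|\le\tfrac14 F+C(1+\|\gamma\|)N^\beta$, and therefore $F(t)\le 2\mathcal{E}(t)+C_\eps N^{\beta+\eps}$.

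For the energy growth I would bound $\langle\Phi,\dot{\bH}\Phi\rangle=\Tr(\dot{h}\gamma)+\mathrm{Re}\,\langle\partial_t\widetilde{K}_2,\alpha\rangle_{\rm HS}$ term by term, exploiting $\|u(t)\|_{L^\infty}+\|\partial_t u(t)\|_{L^\infty}\le C(1+t)^{-3/2}$ from Lemma~\ref{lem:Hartree-equation}. The multiplication part of $\dot{h}$ is routed through $V=|u|^2*w_N$, whose $\|V\|_{L^\infty}\le\|u\|_{L^\infty}^2\|w\|_{L^1}\le C(1+t)^{-3}$, and this decay survives after one time derivative and after inserting $\partial_t Q=-|Q\partial_t u\rangle\langle u|-|u\rangle\langle Q\partial_t u|$: using $Q\gamma Q=\gamma$, the rank-one contributions of $\partial_t Q$ can be rewritten so as to be tested against the decaying potential $V$ rather than against the merely bounded operator $\widetilde{K}_1$, restoring the $(1+t)^{-3/2}$ decay. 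The $\partial_t\widetilde{K}_2$ pairing term is treated exactly as in the coercivity step, with one extra factor of $\|\partial_t u\|_{L^\infty}$, yielding $\|(1-\Delta)^{-1/2}\otimes 1\,\partial_t\widetilde{K}_2\|_{\rm HS}^2\le C N^\beta(1+t)^{-3}$. Altogether this gives $\big|\tfrac{d}{dt}\mathcal{E}(t)\big|\le C_\eps(1+t)^{-3/2}\big(N^{\beta+\eps}+F(t)\big)$.

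Finally I would combine the two. Since the hypotheses \eqref{eq:assumption-Phi0} give $\mathcal{E}(0)\le\Tr\big((1-\Delta)\gamma(0)\big)+C\langle\Phi(0),\cN\Phi(0)\rangle+C_\eps N^{\beta+\eps}\le C_\eps N^{\beta+\eps}$, integrating the energy bound and substituting the coercivity estimate yields $F(t)\le C_\eps N^{\beta+\eps}+C\int_0^t(1+s)^{-3/2}F(s)\,\d s$, and Gr\"onwall's inequality with the integrable kernel $\int_0^\infty(1+s)^{-3/2}\d s=2$ closes the estimate with a constant uniform in $t$. I expect the main obstacle to be the sharp $N^\beta$ bound on $\|(1-\Delta)^{-1/2}\otimes 1\,\widetilde{K}_2\|_{\rm HS}$ — this is the single place where the short range of $w_N$ is converted into the correct power of $N$ — together with verifying that every term produced by differentiating the coefficients of $\bH(t)$ genuinely carries the dispersive $(1+t)^{-3/2}$ decay, which is precisely what makes the bound hold uniformly in time.
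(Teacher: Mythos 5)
Your overall strategy coincides with the paper's: both monitor the Bogoliubov energy $\langle\Phi(t),\bH(t)\Phi(t)\rangle$, both rest on the coercivity of $\bH(t)$ with respect to $\dGamma(1-\Delta)$ via the key Sobolev-type bound $\|(1-\Delta_x)^{-1/2}K_2(t,\cdot,\cdot)\|_{L^2}^2\le C_\eps(1+t)^{-3}N^{\beta+\eps}$ (your Fourier heuristic for $\|(1-\Delta_x)^{-1/2}\widetilde K_2\|_{\rm HS}^2\lesssim N^\beta$ is exactly the content of the cited lemma of \cite{NamNap-16}), both exploit the dispersive decay of $u(t)$ to make $\partial_t\bH(t)$ integrable in time, and both close with a Gr\"onwall argument. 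The paper phrases this through the quadratic-form inequalities \eqref{eq:BogHam-bounds1}--\eqref{eq:BogHam-bounds3} on Fock space, while you work at the level of the density matrices $(\gamma,\alpha)$ of the quasi-free state; that difference is cosmetic.

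There is, however, one genuine gap: your control of the particle number. You assert ``$\langle\Phi(t),\cN\Phi(t)\rangle\le C_\eps N^\eps$ from \eqref{eq:Bog-N}'', but \eqref{eq:Bog-N} actually gives $\langle\Phi(t),\cN\Phi(t)\rangle\le C\big(\langle\Phi(0),\cN\Phi(0)\rangle^2+[\log(2+t)]^2\big)$, which grows in time. This matters because in your coercivity step the quantity $\|\gamma(t)\|\le\Tr\gamma(t)=\langle\Phi(t),\cN\Phi(t)\rangle$ multiplies $N^\beta$ \emph{outside} the time integral: you get $F(t)\le 2\mathcal{E}(t)+C_\eps N^{\beta+\eps}+CN^\beta[\log(2+t)]^2$, and no Gr\"onwall argument can remove the last term. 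Your proof therefore yields $F(t)\le C_\eps N^{\beta+\eps}\big(1+[\log(2+t)]^2\big)$, which is strictly weaker than Lemma \ref{lem:bH-kinetic}, whose constant $C_\eps$ is, by the paper's stated convention, independent of $t$. The paper avoids \eqref{eq:Bog-N} entirely in this proof: it derives a \emph{second} Gr\"onwall inequality for $\langle\Phi(t),\cN\Phi(t)\rangle$ from the commutator bound \eqref{eq:BogHam-bounds3}, namely
\begin{equation*}
\langle\Phi(t),\cN\Phi(t)\rangle\le C_\eps\int_0^t(1+s)^{-3/2}\,\langle\Phi(s),\dGamma(1-\Delta)\Phi(s)\rangle\,\d s+C_\eps N^{\beta+\eps},
\end{equation*}
and couples it to the kinetic inequality; since both inequalities have integrable-in-time kernels and $N^{\beta+\eps}$ forcing, the coupled system closes uniformly in $t$. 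If you replace your use of \eqref{eq:Bog-N} by this self-consistent commutator estimate (in your language: bound $\partial_t\Tr\gamma_{\Phi(t)}$ using the off-diagonal pairing with $\alpha$ and the same $(1-\Delta_x)^{-1/2}\widetilde K_2$ bound), the rest of your argument goes through and recovers the lemma as stated.
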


\begin{proof} \smartqed For a general quadratic Hamiltonian, we have 
$$ \dGamma(H) + \frac{1}{2} \iint \Big( K(x,y) a_x^* a_y^* + \overline{K(x,y)}a_x a_y \Big) \d x \d y \ge -\frac{1}{2} \iint |(H_x^{-1/2} K(x,y)|^2 \d x \d y.$$
This bound can be found in our recent joint work with Solovej \cite[Lemma 9]{NamNapSol-16} (see also \cite[Theorem 5.4]{BruDer-07} for a similar result). Combining this with the Sobolev-type estimate (see \cite[Lemma 6]{NamNap-16})
\begin{align*}
\|(1-\Delta_x)^{-1/2} K_2(t, \cdot, \cdot)\|^2_{L^2} + \|(1-\Delta_x)^{-1/2} \partial_t K_2(t, \cdot, \cdot)\|^2_{L^2} \le C_\eps (1+t)^{-3} N^{\beta+\eps} 
\end{align*}
we obtain the quadratic form inequalities (see \cite[Lemma 7]{NamNap-16})
\begin{align}
\pm \Big( \bH(t) + \dGamma(\Delta) \Big) &\le \eta \dGamma(1-\Delta) +  \frac{C_\eps(\cN+N^{\beta+\eps})}{\eta(1+t)^3}, \label{eq:BogHam-bounds1}\\
\pm \partial_t \bH(t) &\le \eta \dGamma(1-\Delta) +  \frac{C_\eps (\cN  +  N^{\beta+\eps})}{ \eta (1+t)^3},\label{eq:BogHam-bounds2}\\
\pm i[\bH(t),\cN] &\le \eta \dGamma(1-\Delta) + \frac{C_\eps (\cN  +  N^{\beta+\eps})}{ \eta (1+t)^3} \label{eq:BogHam-bounds3}
\end{align}
for all $\eta>0$. On the other hand, from Bogoliubov equation \eqref{eq:Bogoliubov-equation}, we have
\begin{align} \label{eq:Phit-Grw}
\big\langle \Phi(t), \bH (t) \Phi(t)  \big\rangle - \big\langle \Phi(0), \bH (0) \Phi(0)  \big\rangle = \int_0^t \big\langle \Phi(s), \partial_s \bH (s) \Phi(s)  \big\rangle \d s. 
\end{align}
Using \eqref{eq:BogHam-bounds1} with $\eta=1/2$ we have $\big\langle \Phi(0), \bH (0) \Phi(0)  \big \rangle  \le C_\eps N^{\beta+\eps}$ and
$$
\big\langle \Phi(t), \bH (t) \Phi(t)  \big\rangle  \ge  \frac{1}{2} \big \langle \Phi(t), \dGamma(1-\Delta)  \Phi(t) \big\rangle - C_\eps \Big( \big \langle \Phi(t), \cN  \Phi(t) \big\rangle +  N^{\beta+\eps} \Big)
$$
Using \eqref{eq:BogHam-bounds2} with $\eta=(1+t)^{-3/2}$ we get
$$
\big\langle \Phi(t), \partial_t \bH (t)\Phi(t)  \big\rangle \le C_\eps (1+t)^{-3/2} \Big( \big\langle \Phi(t), \dGamma(1-\Delta) \Phi(t) \big\rangle+ N^{\beta+\eps} \Big) .
$$
Thus \eqref{eq:Phit-Grw} implies that
\begin{align} \label{eq:Phit-Grw-1}
\big \langle \Phi(t), \dGamma(1-\Delta)  \Phi(t) \big\rangle & \le  C_\eps \int_0^t (1+s)^{-3/2} \big \langle \Phi(s), \dGamma(1-\Delta)  \Phi(s) \big\rangle \d s \nn\\ &\qquad \qquad + C_\eps \Big( \big \langle \Phi(t), \cN  \Phi(t) \big\rangle + N^{\beta+\eps} \Big). 
\end{align}
Similarly, we can estimate $\partial_t \langle \Phi(t), \cN \Phi(t) \rangle$ by using Bogoliubov equation \eqref{eq:Bogoliubov-equation} and \eqref{eq:BogHam-bounds3} with $\eta=(1+t)^{-3/2}$. Then we integrate the resulting bound and obtain
\begin{align*}
\langle \Phi(t), \cN \Phi(t) \rangle \le C_\eps \int_0^t (1+s)^{-3/2}  \big\langle \Phi(s), \dGamma(1-\Delta) \Phi(s) \big\rangle \d s + C_\eps N^{\beta+\eps}.
\end{align*}
Inserting the latter inequality into the right side of \eqref{eq:Phit-Grw-1} we obtain
\begin{align*} 
\big \langle \Phi(t), \dGamma(1-\Delta)  \Phi(t) \big\rangle & \le C_\eps (1+s)^{-3/2} \int_0^t  \big \langle \Phi(s), \dGamma(1-\Delta)  \Phi(s) \big\rangle  \d s + C_\eps N^{\beta+\eps}. 
\end{align*}
The desired result then follows from a  Gronwall-type inequality.  \qed
\end{proof}

The proof of Theorem \ref{lem:wHN-kinetic} is based on a similar argument. We will use the following estimates on the error terms $R_j$'s in \eqref{eq:eq-PhiNt} (see  \cite[Lemmas 9, 11]{NamNap-16}).

\begin{lemma}[Control of error terms]\label{lem:Rj} Let $R_j$'s be as in \eqref{eq:eq-PhiNt}. Then we have the quadratic form estimates on $\1^{\le N} \cF_+(t)$: 
\begin{align*}
\pm (R_j+R_j^*) \le &\eta \Big( R_4 + \frac{\cN^2}{N} \Big)+ \frac{C (1+\cN)}{\eta(1+t)^3}, \quad \forall \eta>0, \,\forall j=0,1,2,3, \\
 0\le R_4 &\le CN^{3\beta-1} \cN^2, \quad R_4\le CN^{\beta-1} \dGamma(-\Delta) \cN,\\
 \pm \partial_t (R_j+R_j^*) &\le \eta \Big( R_4 + \frac{\cN^2}{N} \Big)+ \frac{C (1+\cN)}{\eta(1+t)^3},\, \forall j=0,1,2,3,4,\\
\pm i[(R_j+R_j^*),\cN] &\le \eta \Big( R_4 + \frac{\cN^2}{N} \Big)+ \frac{C (1+\cN)}{\eta(1+t)^3},\, \forall j=0,1,2,3,4.
\end{align*}
\end{lemma}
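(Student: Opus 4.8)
The plan is to estimate the five remainders one at a time, reducing every inequality to Cauchy--Schwarz and elementary second-quantization bounds fed by two structural inputs. The first is the dispersive decay of the condensate from Lemma~\ref{lem:Hartree-equation}: $\|u(t)\|_{L^\infty},\|\partial_t u(t)\|_{L^\infty}\le C(1+t)^{-3/2}$ and $\|u(t)\|_{H^2}\le C$. The second is the scaling of the interaction, $\|w_N\|_{L^1}=\|w\|_{L^1}$, $\|w_N\|_{L^{3/2}}=N^{\beta}\|w\|_{L^{3/2}}$ and $\|w_N\|_{L^\infty}=N^{3\beta}\|w\|_{L^\infty}$, from which $\int w_N(x-y)|u(t,x)|^2\,\d x=(w_N*|u(t)|^2)(y)\le\|w_N\|_{L^1}\|u(t)\|_{L^\infty}^2\le C(1+t)^{-3}$ and, for the kernel $\widetilde K_1$ entering $R_0$, $\|\widetilde K_1(t)\|\le\|w_N\|_{L^1}\|u(t)\|_{L^\infty}^2\le C(1+t)^{-3}$. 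Together with $\pm\dGamma(A)\le\|A\|\,\cN$, $\|a(f)\Psi\|\le\|f\|\,\|\cN^{1/2}\Psi\|$, and---crucially---$\cN\le N$ on $\1^{\le N}$ (so any surplus power of $\cN$ can be traded for $N$, e.g. $\cN^2/N\le\cN$), these inputs turn each $R_j$ into a time-decaying scalar times a $\cN$-monomial over a power of $N$, which a single Young inequality then splits into the stated form $\eta(R_4+\cN^2/N)+\tfrac{C(1+\cN)}{\eta(1+t)^3}$.

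I would start with the quartic term $R_4$, whose two bounds fix the shape of all the right-hand sides. Since $w\ge0$, writing $R_4=\tfrac{1}{2(N-1)}\iint w_N(x-y)\,b_x^*b_y^*b_yb_x\,\d x\,\d y$ with the $Q(t)$-dressed operators $b$ exhibits it as a genuine two-body interaction, hence $R_4\ge0$. Replacing $w_N$ by $\|w_N\|_{L^\infty}$ (the difference remains a positive interaction) gives $R_4\le\tfrac{\|w_N\|_{L^\infty}}{2(N-1)}\dGamma(Q)(\dGamma(Q)-1)\le CN^{3\beta-1}\cN^2$. For the second bound I invoke the three-dimensional Sobolev inequality $w_N(x-y)\le C\|w_N\|_{L^{3/2}}(-\Delta_x)$, as a quadratic form in $x$ uniformly in $y$; second-quantizing and using $\|w_N\|_{L^{3/2}}=N^{\beta}\|w\|_{L^{3/2}}$ gives $R_4\le CN^{\beta-1}\dGamma(-\Delta)\,\cN$. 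This Sobolev estimate is the one genuinely new ingredient and is exactly what later allows the kinetic energy to dominate the interaction throughout $0<\beta<1/2$.

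The remaining terms follow the uniform scheme, the cubic term being the structural heart. For the diagonal term $R_0=\dGamma(A)\tfrac{1-\cN}{N-1}$ with $A=Q[w_N*|u|^2+\widetilde K_1-\mu_N]Q$ and $\|A\|\le C(1+t)^{-3}$, self-adjointness and $\cN\le N$ give $\pm R_0\le C(1+t)^{-3}(\cN^2+\cN)/N$, which Young-splits at once. For $R_1=-2\tfrac{\cN\sqrt{N-\cN}}{N-1}a(g)$ with $g=Q[w_N*|u|^2]u$ and $\|g\|\le C(1+t)^{-3}$, Cauchy--Schwarz gives $|\langle\Psi,R_1\Psi\rangle|\le CN^{-1/2}\|g\|\,\|\cN\Psi\|\,\|\cN^{1/2}\Psi\|$, and Young produces $\eta\,\cN^2/N+\tfrac{C(1+\cN)}{\eta(1+t)^3}$. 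The cubic term $R_3$ is where the form of the estimate is decided: I would split $w_N=\sqrt{w_N}\sqrt{w_N}$ and Cauchy--Schwarz so that one factor assembles the quartic form $\iint w_N\|b_xb_y\Psi\|^2\sim N\langle\Psi,R_4\Psi\rangle$ while the other collects the condensate leg $\int w_N(x-y)|u(x)|^2\,\d x\le C(1+t)^{-3}$ against a single annihilation operator. With the prefactor $\sqrt{N-\cN}/(N-1)\le CN^{-1/2}$ this yields $|\langle\Psi,R_3\Psi\rangle|\le C\big((1+t)^{-3}\langle\Psi,\cN\Psi\rangle\big)^{1/2}\langle\Psi,R_4\Psi\rangle^{1/2}$, and Young lands precisely on $\eta R_4+\tfrac{C(1+\cN)}{\eta(1+t)^3}$. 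It is this Cauchy--Schwarz of the odd term against $R_4$ that forces $R_4$ itself---not merely $\dGamma(1-\Delta)$---onto the right-hand side.

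The delicate term is the number-weighted pairing $R_2=\iint K_2(x,y)\,a_x^*a_y^*\,\d x\,\d y\,\big(\tfrac{\sqrt{(N-\cN)(N-\cN-1)}}{N-1}-1\big)$, whose prefactor is bounded in modulus by $C(\cN+1)/N$. Here I would apply the quasi-free pairing inequality of \cite[Lemma~9]{NamNapSol-16} (the same one underlying Lemma~\ref{lem:bH-kinetic}) to $K_2$, combined with the weighted estimate $\|(1-\Delta_x)^{-1/2}K_2(t)\|_{L^2}^2\le C_\eps(1+t)^{-3}N^{\beta+\eps}$ already used in the proof of Lemma~\ref{lem:bH-kinetic}; the extra factor $O(\cN/N)$ from the prefactor supplies the power of $1/N$ that, together with $\cN\le N$, absorbs the $N^{\beta}$ loss of the pairing bound into $\cN^2/N$ and $\tfrac{1+\cN}{(1+t)^3}$. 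Finally, the time-derivative and commutator versions reduce to the static ones: $\partial_t$ only differentiates $u(t),Q(t),\mu_N(t),K_2(t)$ (with $w_N$ time-independent) and $\partial_t u$ obeys the same decay $\|\partial_t u\|_{L^\infty}\le C(1+t)^{-3/2}$, so $\partial_t(R_j+R_j^*)$ is a sum of terms of exactly the same type; and since $R_0,R_4$ preserve $\cN$, the commutator $i[\,\cdot\,,\cN]$ annihilates them, while on $R_1,R_2,R_3$ it merely rescales the off-diagonal pieces by fixed integers. The one real obstacle is keeping every remainder expressible through $R_4$ and $\cN^2/N$ alone; this closes only because the Sobolev bound on $R_4$ and the splitting of the cubic $R_3$ are available, and because the weight $O(\cN/N)$ in $R_2$ is just strong enough to offset the $N^{\beta}$ cost of the pairing estimate.
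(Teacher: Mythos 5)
Your scheme for $R_0$, $R_1$, $R_3$, and for the two bounds on $R_4$ (positivity plus the $L^\infty$ replacement, and the Sobolev inequality $w_N(x-y)\le C\|w_N\|_{L^{3/2}}(-\Delta_x)$ yielding $R_4\le CN^{\beta-1}\dGamma(-\Delta)\cN$) is correct, and it is essentially the argument of the companion paper \cite[Lemmas 9, 11]{NamNap-16} to which this lemma is deferred. The genuine gap is your treatment of $R_2$. The pairing inequality of \cite[Lemma 9]{NamNapSol-16} applies to a bare quadratic Hamiltonian $\dGamma(H)+\frac12\iint(Ka^*a^*+\mathrm{h.c.})$; in $R_2$ the pairing operator is multiplied by the weight $f(\cN)=\frac{\sqrt{(N-\cN)(N-\cN-1)}}{N-1}-1$, which does not commute with $a_x^*a_y^*$, so the inequality cannot simply be ``applied to $K_2$'' and then decorated with the prefactor. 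More importantly, even if you push a weighted version through (say via the form Cauchy--Schwarz of Step 2 of the proof of Theorem \ref{thm:main}), the right-hand side necessarily contains $\eta\,\dGamma(1-\Delta)$, possibly sandwiched with factors $(\cN+1)/N$. That term is \emph{not} permitted by the statement: $\dGamma(-\Delta)$ is not dominated by $R_4+\cN^2/N+C(1+\cN)$ (the available bound $R_4\le CN^{\beta-1}\dGamma(-\Delta)\cN$ goes in the opposite direction). This is not cosmetic. In Step 3 of the proof of Theorem \ref{thm:main} it is precisely the appearance of $R_4$, rather than $\dGamma(1-\Delta)$, that allows $\1^{\le M}R_4\le CN^{\beta-1}M\dGamma(-\Delta)$ and hence the small factor $MN^{(2\beta+\eps-1)/2}$ in $E_1$; with $\dGamma(1-\Delta)$ in its place one only has $\langle\dGamma(1-\Delta)\rangle\le C_\eps N^{\beta+\eps}$ for both $\Phi_N(t)$ and $\Phi(t)$, producing errors of order $N^{(\beta+\eps)/2}$ or worse, which do not vanish as $N\to\infty$.

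The fix is that $R_2$ is handled by the very same $\sqrt{w_N}\sqrt{w_N}$ Cauchy--Schwarz you set up for $R_3$; it is not the delicate exception but another instance of the rule. Since $K_2=Q(t)\otimes Q(t)\widetilde K_2$ with $\widetilde K_2(x,y)=u(t,x)w_N(x-y)u(t,y)$, writing $b_x=\int Q(t;x,y)\,a_y\d y$ one has $\langle\Psi,R_2\Psi\rangle=\iint w_N(x-y)u(t,x)u(t,y)\,\langle b_yb_x\Psi, f(\cN)\Psi\rangle\d x\d y$, whence
\begin{align*}
|\langle\Psi,R_2\Psi\rangle|
&\le \Big(\iint w_N(x-y)\|b_xb_y\Psi\|^2\d x\d y\Big)^{1/2}\Big(\iint w_N(x-y)|u(t,x)|^2|u(t,y)|^2\d x\d y\Big)^{1/2}\|f(\cN)\Psi\| \\
&\le \big(2(N-1)\langle\Psi,R_4\Psi\rangle\big)^{1/2}\cdot C(1+t)^{-3/2}\cdot CN^{-1}\|(\cN+1)\Psi\|,
\end{align*}
and Young's inequality together with $(\cN+1)^2/N\le 2(\cN+1)$ on $\1^{\le N}$ gives exactly $\eta R_4+\frac{C(1+\cN)}{\eta(1+t)^3}$; no $N^{\beta}$ loss ever appears, so no absorption argument is needed. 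With this replacement your proof closes, up to one gloss worth flagging: $\partial_t R_4$ is not ``of exactly the same type'' as $R_4$, since the derivative hits $Q(t)$ and produces quartic expressions with one condensate leg $u$ or $\partial_t u$; these are again of $R_3$-type and succumb to the same Cauchy--Schwarz against $R_4$, using $\|\partial_tu(t,\cdot)\|_{L^\infty}\le C(1+t)^{-3/2}$.
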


Now we are ready to provide

\begin{proof}[of Theorem \ref{lem:wHN-kinetic}] \smartqed From \eqref{eq:eq-PhiNt} we have
\begin{equation} \label{eq:PhiN-kin-0}
\big\langle \Phi_N(t), \widetilde H_N(t) \Phi_N(t) \big\rangle - \big\langle \Phi_N(0), \widetilde H_N(0) \Phi_N(0) \big\rangle =  \int_0^t  \! \big\langle \Phi_N(s), \partial_s \widetilde H_N(s) \Phi_N(s) \big\rangle \d s .
\end{equation}
Using \eqref{eq:BogHam-bounds1}  and Lemma \ref{lem:Rj}, we can estimate 
\begin{align*}
\big\langle \Phi_N(t), \widetilde H_N(t) \Phi_N(t) \big\rangle & \ge \frac{1}{2} \big\langle \Phi_N(t), (\dGamma(1-\Delta) + R_4) \Phi_N(t) \big\rangle  \nn\\
&\qquad\qquad- C_\eps \Big(N^{\beta+\eps} + \big\langle \Phi_N(t), \cN  \Phi_N(t) \big\rangle \Big), \\
\big\langle \Phi_N(0), \widetilde H_N(0) \Phi_N(0) \big\rangle & \le C_\eps (N^{\beta+\eps}+N^{3\beta-1+\eps}).
\end{align*}
Here in the last inequality, we have used $R_4\le CN^{3\beta-1}\cN^2$ (see Lemma \ref{lem:Rj}) and a well-known moment estimate for every quasi-free state $\Phi$:
\bq \label{eq:moment-quasi-free}
\Big\langle \Phi, (1+\cN)^{s} \Phi \Big\rangle \le C_{s} \Big\langle \Phi, (1+\cN) \Phi \Big\rangle^{s}
\eq
where the constant $C_s$ depends only on $s\in \mathbb{N}$ (see e.g. \cite[Lemma 5]{NamNap-15}). Moreover, from \eqref{eq:BogHam-bounds2} and Lemma \ref{lem:Rj} we obtain 
\begin{align*} 
\big\langle \Phi_N(t), \partial_t \widetilde H_N(t) \Phi_N(t) \big\rangle  \le C_\eps (1+t)^{-3/2} \Big( \big\langle \Phi_N(t), (\dGamma(1-\Delta) + R_4) \Phi_N(t) \big\rangle+N^{\beta+\eps} \Big).
\end{align*}
Thus \eqref{eq:PhiN-kin-0} implies that 
\begin{align} \label{eq:PhiN-kin-4}
\big\langle \Phi_N(t), (\dGamma(1-\Delta) &+ R_4) \Phi_N(t) \big\rangle \le C_\eps \int_0^t  \frac{\big\langle \Phi_N(s), (\dGamma(1-\Delta) + R_4) \Phi_N(s) \big\rangle}{(1+s)^{3/2}} \d s \nn 
\\
&+ C_\eps \Big( N^{\beta+\eps}  +N^{3\beta-1+\eps}+  \big\langle \Phi_N(t), \cN  \Phi_N(t) \big\rangle \Big).
\end{align}
Next, we estimate $\partial_t \big\langle \Phi_N(t), \cN  \Phi_N(t) \big\rangle$ by  using \eqref{eq:eq-PhiNt}, \eqref{eq:BogHam-bounds3} and the last inequality in Lemma \ref{lem:Rj}. Then we integrate the resulting bound to get
$$
\langle \Phi(t), \cN \Phi(t) \rangle \le C_\eps \int_0^t  (1+s)^{-3/2} \big\langle \Phi_N(s), (\dGamma(1-\Delta) + R_4) \Phi_N(s) \big\rangle \d s + C_\eps N^{\beta+\eps}.
$$
Substituting the latter estimate into \eqref{eq:PhiN-kin-4}, we find that
\begin{align*}
&\big\langle \Phi_N(t), (\dGamma(1-\Delta) + R_4) \Phi_N(t) \big\rangle \nn\\
&\le C_\eps \int_0^t  \frac{\big\langle \Phi_N(s), (\dGamma(1-\Delta) + R_4) \Phi_N(s) \big\rangle}{(1+s)^{3/2}} \d s + C_\eps (N^{\beta+\eps}+N^{3\beta-1+\eps})  .
\end{align*}
By a Gronwall-type inequality, we conclude that 
$$
\big\langle \Phi_N(t), (\dGamma(1-\Delta) + R_4) \Phi_N(t) \big\rangle \le C_\eps (N^{\beta+\eps}+N^{3\beta-1+\eps}).
$$ 
Since $R_4\ge 0$, the desired kinetic estimate follows. \qed
\end{proof}

\section{Norm approximation} \label{sec:sketch-proof}

\begin{proof}[of Theorem \ref{thm:main}] \smartqed  {\bf Step 1.} The desired estimate \eqref{eq:thm-mainresult} is 
$$ \|  \Psi_N(t) - U_N(t)^* \1^{\le N} \Phi(t)\|_{\gH^N}^2  \le C_\eps (1+t)^{1+\eps} N^{(2\beta+\eps-1)/2}, \quad \forall \eps>0.$$
Since $\Phi(t)=U_N(t) \Psi_N(t)$ and $U_N(t):\gH^N \to \1^{\le N}\cF_+(t)$ is a unitary operator,
$$
\|  \Psi_N(t) - U_N(t)^* \1^{\le N} \Phi(t)\|_{\gH^N} = \| U_N(t) \Psi_N(t) - \1^{\le N} \Phi(t)\| \le \| \Phi_{N}(t)- \Phi(t)\|.
$$
It remains to bound $\| \Phi_{N}(t)- \Phi(t)\|$. Using equations \eqref{eq:Bogoliubov-equation} and \eqref{eq:eq-PhiNt}, we can write \begin{align} \label{eq:final-proof-2} 
 \partial_t \| \Phi_{N}(t)-\Phi(t)\|^2 &= 2\Re \,\big \langle i \Phi_N(t),  (\widetilde H_N(t) - \bH(t))\Phi(t)\big\rangle\\
 = &  \sum_{j=0}^4 \Re \big\langle i\Phi_N(t), (R_j+R_j^*) \1^{\le N} \Phi(t) \big\rangle - 2\Re\,\big\langle i\Phi_N(t), \bH \1^{>N} \Phi(t) \big\rangle\nn
\end{align}
where $\1^{>N} := \1 - \1^{\le N}$. Next, we will estimate the right side of  \eqref{eq:final-proof-2}.
\medskip

\noindent
{\bf Step 2.} To bound the last term of \eqref{eq:final-proof-2}, we use $\Phi_N(t) \in \1^{\le N} \cF_+(t)$ to write
$$
\big\langle \Phi_N(t), \bH \1^{>N} \Phi(t) \big\rangle = \big\langle \Phi_N(t), (\bH - \dGamma(h))  \1^{>N} \Phi(t) \big\rangle.
$$
As in the proof of Lemma \ref{lem:bH-kinetic}, we can show that
\bq \label{eq:final-proof-simplebH}
\pm (\bH - \dGamma(h)) \le C( \cN + N^{3\beta}). \nn
\eq
It is a general fact that if $\pm B\leq A$ as quadratic forms, then we have  the Cauchy-Schwarz type inequality 
$|\langle f, B g\rangle|  \le 3 \langle f, A f\rangle^{1/2}\langle g, A g\rangle^{1/2}$. Consequently,
\begin{align*}
&\left| \big\langle \Phi_N(t), (\bH - \dGamma(h))  \1^{>N} \Phi(t) \big\rangle \right| \\
&\le 3   \big\langle \Phi_N(t), ( \cN + N^{3\beta})  \Phi_N(t) \big \rangle ^{1/2}  \big\langle \1^{>N}\Phi(t), ( \cN + N^{3\beta}) \1^{>N} \Phi(t) \big \rangle ^{1/2}\\
&\le 3(N + N^{3\beta})^{1/2} \big \rangle ^{1/2}  \big\langle \1^{>N}\Phi(t), (\cN+N^{3\beta})  \cN^{s}N^{-s} \1^{>N} \Phi(t) \big \rangle ^{1/2}
\end{align*}
for all $s\ge 1$. The term $\langle \Phi (t),  \cN^s\Phi(t)\rangle$ can be bounded by \eqref{eq:moment-quasi-free} and the bound on $\langle \Phi (t),  \cN \Phi(t)\rangle$ in Lemma \ref{lem:Bogoliubov-equation}. We can choose $s$ large enough (but fixed) and obtain  
\begin{align}\label{eq:final-term-one} 
\left| \big\langle \Phi_N(t), \bH \1^{>N} \Phi(t) \big\rangle \right| &\le C_{\eps}  (1+t)^\eps N^{-1}. 
\end{align}
\noindent{\bf Step 3.} To control the first term on the right side of \eqref{eq:final-proof-2}, we have to introduce a cut-off on the number of particles. Since there are at most 2 creation or annihilation operators in the expressions of $R_j$'s, we can write 
\begin{align*}
\big\langle  \Phi_N(t), (R_{j} +R_j^*)\1^{\le N} \Phi(t) \big\rangle &= \big\langle  \1^{\le M} \Phi_N(t), (R_{j}+R_j^*) \1^{\le M+2} \Phi(t) \big\rangle \\
&\quad + \big\langle  \1^{> M} \Phi_N(t), (R_{j} +R_j^*)  \1^{\le N}  \1^{> M-2}  \Phi(t) \big\rangle\nn
\end{align*}
for all $4<M<N-2$.  Then we estimate each term on the right side by Lemma \ref{lem:Rj} and the Cauchy-Schwarz type inequality as in Step 2. We obtain 
\begin{align}  \label{eq:final-proof-3}
\left| \big\langle  \Phi_N(t), (R_{j} +R_j^*)\1^{\le N} \Phi(t) \big\rangle \right| \le C(E_1 + E_2) 
\end{align}
where
\begin{align*}
E_1&= \inf_{\eta>0} \left\langle \1^{\le M} \Phi_N(t), \Big( (1+\eta) R_4 + \eta \frac{\cN^2}{N} + \frac{1+\cN}{\eta(1+t)^3} \Big) \1^{\le M} \Phi_N(t)  \right\rangle^{1/2}  \nn\\
& \qquad  \times \left\langle \1^{\le M+2} \Phi(t), \Big( (1+\eta) R_4 + \eta \frac{\cN^2}{N} + \frac{1+\cN}{\eta(1+t)^3} \Big) \1^{\le M+2} \Phi(t)  \right\rangle^{1/2}, \nn \\
E_2& = \inf_{\eta>0} \left\langle \1^{> M} \Phi_N(t), \Big( (1+\eta) R_4 + \eta \frac{\cN^2}{N} + \frac{1+\cN}{\eta(1+t)^3} \Big) \1^{> M} \Phi_N(t)  \right\rangle^{1/2} \\
&\qquad \times \left\langle \1^{> M-2} \Phi(t), \Big((1+\eta) R_4 + \eta \frac{\cN^2}{N} + \frac{1+\cN}{\eta(1+t)^3} \Big) \1^{> M-2} \Phi(t)  \right\rangle^{1/2} .
\end{align*}

To bound $E_1$, we use 
$$ \1^{\le M} R_4 \le C N^{\beta-1} \1^{\le M} \cN \dGamma(-\Delta) \le CN^{\beta-1}M \dGamma(-\Delta)$$
(see Lemma \ref{lem:Rj}) together with the kinetic estimate in Theorem \ref{lem:wHN-kinetic}, and then optimize over $\eta>0$. We get 
\begin{align*} 
E_1\le C_\eps  \Big( M N^{(2\beta+\eps-1)/2} + M^{3/2} N^{-1/2} \Big) . 
\end{align*}  
(The error term $N^{3\beta-1+\eps}$ in Theorem \ref{lem:wHN-kinetic} is 
absorbed by $N^{\beta+\eps}$ when $\beta<1/2$.)

The bound on $E_2$ is obtained using the argument in Step 2 and reads 
$$E_2 \le C_{\eps,s}N^{3\beta+1} M^{1-s/2} N^{s\eps} [\log(2+t)]^{s}.$$
In summary, from \eqref{eq:final-proof-3} it follows that
\begin{align*} 
\left| \big\langle  \Phi_N(t), (R_{j} +R_j^*)\1^{\le N} \Phi(t) \big\rangle \right| &\le C_\eps  \Big( M N^{(2\beta+\eps-1)/2} + M^{3/2} N^{-1/2} \Big) \\
& + C_{\eps,s}N^{3\beta+1} M^{1-s/2}N^{s\eps} [\log(2+t)]^{s}
\end{align*}
for all $4<M<N-2$ and $s\ge 2$. We can choose $M=N^{3\eps}$ and $s=s(\eps)$ sufficiently large (e.g. $s\ge 6 (1+\beta+\eps)/\eps$) to obtain 
\begin{align} \label{eq:final-proof-2a}
\left| \big\langle  \Phi_N(t), (R_{j} +R_j^*)\1^{\le N} \Phi(t) \big\rangle \right| \le C_\eps \Big( N^{(2\beta+9\eps-1)/2} + N^{-1}(1+t)^{\eps} \Big) .
\end{align}
\noindent 
{\bf Step 4.} Inserting \eqref{eq:final-term-one} and \eqref{eq:final-proof-2a} into \eqref{eq:final-proof-2}, we find that 
$$
\partial_t \| \Phi_{N}(t)-\Phi(t)\|^2 \le C_\eps \Big( N^{(2\beta+9\eps-1)/2} + N^{-1} (1+t)^{\eps} \Big).
$$
Integrating over $t$ and using
$$
\| \Phi_N(0)-\Phi(0)\|^2 = \langle \Phi(0), \1^{>N} \Phi(0) \rangle \le N^{-1} \langle \Phi(0), \cN \Phi(0) \rangle \le C_\eps N^{\eps-1}.
$$
we obtain
\begin{align*}
\| \Phi_{N}(t)-\Phi(t)\|^2 &\le C_\eps (1+t)^{1+\eps} N^{(2\beta+9\eps-1)/2}
\end{align*}
for all $\eps>0$. This leads to the desired estimate \eqref{eq:thm-mainresult}, as explained in Step 1.  \qed
\end{proof}

\end{document}